\newcommand{\nat}{\ensuremath {\mathbb N} }
\newcommand{\Prob}{\mathbb{P}}
\newcommand{\eps}{\varepsilon}
\newcommand{\E}{\mathbb E}
\newcommand{\aas}{{\sl{a.a.s.}}}
\newcommand{\bin}{\mathrm{Bin}}
\newtheorem{cor}{Corollary}
\author{
Lenar Iskhakov\inst{1}
\and
Bogumi\l{} Kami\'nski\inst{2}
\and
Maksim Mironov\inst{1}
\and 
Liudmila Ostroumova Prokhorenkova\inst{1,3}
\and
Pawe\l{}~Pra\l{}at\inst{4}
}
\institute{Advanced Combinatorics and Network Applications Lab, Moscow Institute of Physics and Technology, Moscow, Russia 
\and
Warsaw School of Economics, Warsaw, Poland
\and
Machine intelligence and research department, Yandex, Moscow, Russia
\and
Department of Mathematics, Ryerson University, Toronto, Canada}
\authorrunning{Lenar Iskhakov et al.}
\title{Clustering Properties of Spatial Preferential Attachment Model}
\begin{document}

\maketitle

\begin{abstract}
In this paper, we study the clustering properties of the Spatial Preferential Attachment (SPA) model introduced by Aiello et al. in 2009. This model naturally combines geometry and preferential attachment using the notion of spheres of influence. It was previously shown in several research papers that graphs generated by the SPA model are similar to real-world networks in many aspects. For example, the vertex degree distribution was shown to follow a power law. In the current paper, we study the behaviour of $C(d)$, which is the average local clustering coefficient for the vertices of degree $d$. This characteristic was not previously analyzed in the SPA model. However, it was empirically shown that in real-world networks $C(d)$ usually decreases as $d^{-a}$ for some $a>0$ and it was often observed that $a=1$. We prove that in the SPA model $C(d)$ decreases as $1/d$. Furthermore, we are also able to prove that not only the average but the individual local clustering coefficient of a vertex $v$ of degree $d$ behaves as $1/d$ if $d$ is large enough. The obtained results are illustrated by numerous experiments with simulated graphs.

\end{abstract}

\section{Introduction}
\vspace{-9pt}

The evolution of complex networks attracted a lot of attention in recent years.  Empirical studies of different real-world networks have shown that such networks have some typical properties: small diameter, power-law degree distribution, clustering structure, and others~\cite{costa2007characterization,newman2003structure}. 
Therefore, numerous random graph models have been proposed to reflect and predict such quantitative and topological aspects of growing real-world networks~\cite{boccaletti2006complex,bollobas2003mathematical}.

The most well studied property of complex networks is their vertex degree distribution. For the majority of studied real-world networks, the degree distribution was shown to follow a heavy-tailed distribution~\cite{barabasi1999emergence,faloutsos1999power,newman2005power}.
Another important property of real-world networks is their clustering structure. One way to characterize the presence of clustering structure is to measure the {\it clustering coefficient}, which is, roughly speaking, the probability that two neighbours of a vertex are connected. There are two well-known formal definitions: the global clustering coefficient and the average local clustering coefficient (see Section~\ref{sec:clustering} for details). It is widely believed that for many real-world networks both the average local and the global clustering coefficients tend to non-zero limit as the network becomes large; some numerical values can be found in~\cite{newman2003structure}; however, some contradicting theoretical results are presented in~\cite{ostroumova2016global}.

In this paper, we mostly focus on the behaviour of $C(d)$, which is the average local clustering coefficient for the vertices of degree $d$. It was empirically shown that in real-world networks $C(d)$ usually decreases as $d^{-\psi}$ for some $\psi>0$~\cite{csanyi2004structure,leskovec2008dynamics,serrano2006clustering,vazquez2002large}. In particular, for many studied networks, $C(d)$ scales as $d^{-1}$~\cite{ravasz2003hierarchical}.

We study the clustering properties of the \emph{Spatial Preferential Attachment} (SPA) model introduced in~\cite{spa1}. This model combines geometry and preferential attachment; 
the formal definition is given in Section~\ref{sec:spa_def}. It was previously shown that graphs generated by the SPA model are similar to real-world networks in many aspects.  For example, it was proven in \cite{spa1} that the vertex degree distribution follows a power law. More details on the properties of the SPA model are given in Section~\ref{sec:spa_prop}. However, the clustering coefficient $C(d)$ was not previously analyzed for this model, although some clustering properties were analyzed for the generalized SPA model proposed in \cite{jacob2013spatial}.
It is proved in \cite{jacob2013spatial} and \cite{jacob2015spatial} that the average local clustering coefficient converges in probability to a strictly positive limit. Also, the global clustering coefficient converges to a nonnegative limit, which is nonzero if and only if the power-law degree distribution has a finite variance.

In this paper, we prove that the local clustering coefficient $C(d)$ decreases as $1/d$ in the SPA model. We also obtain some bounds for the individual local clustering coefficients of vertices. The obtained theoretical results are compared with and illustrated by numerous experiments on simulated graphs. Our theoretical results are asymptotic in nature, so we empirically investigate how the model behaves for finite size graphs and see that the asymptotic predictions are still close to empirical observations even for small graph sizes. Additionally, we demonstrate that some of our theoretical assumptions are probably too pessimistic and the SPA model behaves even more predictable than we have proven. We also propose an efficient algorithm for generating graphs according to the SPA model which runs much faster than the straightforward implementation.

Proofs of all theoretical results stated in this paper can be found in the journal version~\cite{iskhakov2017local} that focuses exclusively on asymptotic results of the model. On the other hand, this proceeding version also contains results on simulated graphs and so can be viewed as a complement to the journal version. 

\section{Spatial Preferential Attachment model}\label{sec:spa}

\subsection{Definition}\label{sec:spa_def}

This paper focuses on the \emph{Spatial Preferential Attachment} (SPA) model, which was first introduced by~\cite{spa1}. This model combines preferential attachment with geometry by introducing ``spheres of influence'' whose volume grows with the degree of a vertex. The parameters of the model are the \emph{link probability} $p\in[0,1]$ and two constants $A_1,A_2$ such that $0 < A_1 < \frac{1}{p}$, $A_2>0$. All vertices are placed in the $m$-dimensional unit hypercube $S = [0,1]^m$ equipped with the torus metric derived from any of the $L_k$ norms, i.e.,  
\[
d(x,y)=\min \big\{ ||x-y+u||_k\,:\,u\in \{-1,0,1\}^m \big\} \,\,\,\,\,\,\, \forall  x,y \in S \,.
\]
The SPA model generates a sequence of random directed graphs $\{G_{t}\}$, where $G_{t}=(V_{t},E_{t})$, $V_{t}\subseteq S$. Let $\deg^{-}(v,t)$ be the in-degree of the vertex $v$ in $ G_{t}$, and $\deg^+(v,t)$ its out-degree. Then, the \emph{sphere of influence} $S(v,t)$ of the vertex $v$ at time $t\geq 1$ is the ball centered at $v$ with the following volume:
$$
|S(v,t)|=\min\left\{\frac{A_1{\deg}^{-}(v,t)+A_2}{t},1\right\}.
$$

In order to construct a sequence of graphs we start at $t=0$ with $G_0$ being the null graph. At each time step $t$ we construct $G_{t}$ from $G_{t-1}$ by, first, choosing a new vertex $v_t$ \emph{uniformly at random} from $S$ and adding it to $V_{t-1}$ to create $V_{t}$. Then, independently, for each vertex $u\in V_{t-1}$ such that $v_t \in S(u,t-1)$, a directed link $(v_{t},u)$ is created with probability $p$. Thus, the probability that a link $(v_t,u)$ is added in time-step $t$ equals $p\,|S(u,t-1)|$.


\subsection{Properties of the model}\label{sec:spa_prop}

In this section, we briefly discuss previous studies on properties and applications of the SPA model. This model is known to produce scale-free networks, which exhibit many of the characteristics of real-life networks \cite{spa1,spa2}. 
Specifically, Theorem~1.1 in~\cite{spa1} proves that the SPA model generates graphs with a power-law in-degree distribution with coefficient $1 + 1/(pA_1)$. On the other hand, the average out-degree is asymptotic to $pA_2/(1-pA_1)$, as shown in Theorem~1.3 in \cite{spa1}. In~\cite{spa3}, it was demonstrated that the SPA model give the best fit, in terms of graph structure, for a series of social networks derived from Facebook. In~\cite{spa4}, some properties of common neighbours were used to explore the underlying geometry of the SPA model and quantify vertex similarity based on the distance in the space.  Usually, the distribution of vertices in $S$ is assumed to be uniform~\cite{spa4}, but~\cite{spa5} also investigated non-uniform distributions, which is clearly a more realistic setting. The SPA model was also used to study a duopoly market on which there is uncertainty of a product quality~\cite{SPAKaminski2017}. Finally, in~\cite{modularity} modularity of this model was investigated, which is a global criterion to define communities and a way to measure the presence of community structure in a network.

\section{Clustering coefficient}\label{sec:clustering}

Clustering coefficient measures how likely two neighbours of a vertex are connected by an edge.
There are several definitions of clustering coefficient proposed in the literature (see, e.g.,~\cite{bollobas2003mathematical}). 
The {\it global clustering coefficient} $C_{glob}(G)$ of a graph $G$ is the ratio of three times the number of triangles to the number of pairs of adjacent edges in $G$. 
In other worlds, if we sample a random pair of adjacent vertices in $G$, then $C_{glob}(G)$ is the probability that these three vertices form a triangle. The global clustering coefficient in the SPA model was previously studied in~\cite{jacob2013spatial,jacob2015spatial} and it was proven that $C_{glob}(G_n)$ converges to a limit, which is positive if and only if the power-law degree distribution has a finite variance.

In this paper, we focus on the {\it local clustering coefficient}, which was not previously analyzed for the SPA model.
Let us first define it for an undirected graph $G = (V, E)$.
Let $N(v)$ be the set of neighbours of a vertex $v$, $|N(v)| = \deg(v)$. For any $B \subseteq V$, let $E(B)$ be the set of edges in the graph induced by the vertex set $B$; that is,
$
E(B) = \{ (u,w) \in E : u, w \in B \}.  
$
Finally, \emph{clustering coefficient} of a vertex $v$ is defined as follows:
$$
c(v) =\frac{|E(N(v))|}{\binom{\deg(v)}{2}}.
$$
Clearly, $0 \le c(v) \le 1$.

Note that the local clustering $c(v)$ is defined individually for each vertex and it can be noisy, especially for the vertices of not too large degrees. Therefore, the following characteristic was extensively studied in the literature for various real-world networks and some random graph models. Let $C(d)$ be the local clustering coefficient averaged over the vertices of degree $d$; that is,
$$
C(d) = \frac{\sum_{v: \deg(v) = d}c(v)}{|\{v: \deg(v) = d\}|}\,.
$$
Further in the paper we will also use the notation $c(v,t)$ and $C(d,t)$ referring to graphs on $t$ vertices.

The local clustering $C(d)$ was extensively studied both theoretically and empirically. 
For example, it was observed in a series of papers that in real-world networks $C(d) \sim d^{-\varphi}$ for some $\varphi > 0$.
In particular, \cite{ravasz2003hierarchical} shows that $C(d)$ can be well approximated by $d^{-1}$ for four large networks, \cite{vazquez2002large} obtains power-law in a real network with parameter 0.75, while \cite{csanyi2004structure} obtains $\varphi = 0.33$.
The local clustering coefficient was also studied in several random graph models of complex networks. For instance, it was shown in \cite{dorogovtsev2002pseudofractal,krot2015local,newman2003properties} that some models have $C(d) \sim d^{-1}$. As we prove in this paper, similar behaviour is also observed in the SPA model.

Recall that the graph $G_t$ constructed according to the SPA model is directed. Therefore, we first analyze the directed version of the local clustering coefficient and then, as a corollary, we obtain the corresponding results for the undirected version. Let us now define the directed clustering coefficient. By $N^-(v,t) \subseteq V_t$ we denote the set of in-neighbours of a vertex $v$ at time $t$. So, the directed clustering coefficient of a vertex $v$ at time $t$ and the average directed clustering for the vertices of incoming degree $d$ are defined as
$$
c^{-}(v,t) = \frac{|E(N^-(v,t))|}{\binom{\deg^-(v,t)}{2}}, \,\,\,\,\,\,\,
C^{-}(d,t) = \frac{\sum_{v: \deg^-(v,t) = d}c^-(v,t)}{|\{v: \deg^-(v,t) = d\}|}\,.
$$
Note that we normalize $c^-(v,t)$ by $\binom{\deg^-(v,t)}{2}$, since in the SPA model edges can be created only from younger vertices to older ones.

\section{Results}\label{sec:results_section}

\subsection{Notation}

Let us start with introducing some notation. As typical in random graph theory, all results in this paper are asymptotic in nature; that is, we aim to investigate properties of $G_n$ for $n$ tending to infinity. We say that an event holds \emph{asymptotically almost surely} (\aas) if it holds with probability tending to one as $n\to\infty$. Also, given a set $S$ we say that \emph{almost all} elements of $S$ have some property $P$ if the number of elements of $S$ that do not have $P$ is $o(|S|)$. Finally, we emphasize that the notations $o(\cdot)$ and $O(\cdot)$ refer to functions of $n$, not necessarily positive, whose growth is bounded. We use the notations $f \ll g$ for $f=o(g)$ and $f \gg g$ for $g=o(f)$. We also write $f(n) \sim g(n)$ if $f(n)/g(n) \to 1$ as $n \to \infty$ (that is, when $f(n) = (1+o(1)) g(n)$). 

First we consider the directed clustering coefficient. It turns out that for the SPA model we are able not only to prove the asymptotics for $C^-(d,n)$, which is the average clustering over all vertices of in-degree $d$, but also analyze the individual clustering coefficients $c^-(v,n)$. However, in order to do this, we need to assume that $\deg^-(v,n)$ is large enough.

From technical point of view, it will be convenient to partition the set of contributing edges, $E(N^-(v,n))$, and independently consider edges to ``old'' and to ``young'' neighbours of $v$. Formally, for a given function $\omega(n)$ that tends to infinity as $n\to \infty$, let $\hat{T}_v$ be the smallest integer $t$ such that $\deg^-(v,t)$ exceeds $\omega \log n$ (or $\hat{T}_v=n$ if $\deg^-(v,n) < \omega \log n$). Vertices in $N^-(v,\hat{T}_v)$ are called \emph{old neighbours of $v$}; $N^-(v,n) \setminus N^-(v,\hat{T}_v)$ are \emph{new neighbours of $v$}. Finally,

$$
E_{old}(N^-(v,n)) = \{ (u,w) \in E_n : u \in N^-(v,n), w \in N^-(v,\hat{T}_v) \},
$$
$$
E_{new}(N^-(v,n)) = E(N^-(v,n)) \setminus E_{old}(N^-(v,n))\,;
$$
and
\begin{equation}\label{eq:old_new}
c^-(v,n) = c_{old}(v,n) + c_{new}(v,n),
\end{equation}
where
\begin{eqnarray*}
c_{old}(v,n) &=& |E_{old}(N^-(v,n))| \Big/ \binom{\deg^-(v,n)}{2}, \\
c_{new}(v,n) &=& |E_{new}(N^-(v,n))| \Big/ \binom{\deg^-(v,n)}{2}.
\end{eqnarray*}

\subsection{Results}\label{sec:results}

Let us start with the following theorem which is extensively used in our reasonings and is interesting and important on its own. Variants of this results were proved in~\cite{spa4,spa5}; here, we present a slightly modified statement from~\cite{spa5}, adjusted to our current needs. 

\begin{theorem}\label{degconc}
Let $\omega=\omega(n)$ be any function tending to infinity together with $n$. The following holds with probability $1-o(n^{-4})$. For any vertex $v$ with
$
\deg^-(v,n)=k=k(n) \geq \omega \log n
$
and for all values of $t$ such that 
$$
n \left(\frac{\omega \log n}{k}\right)^{\frac{1}{p A_1}} =: T_v \le t \le n,
$$
we have
$$
\deg^-( v,t) \sim k \left(\frac{t}{n}\right)^{p A_1}.
$$
\end{theorem}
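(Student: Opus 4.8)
The plan is to track the evolution of $\deg^-(v,t)$ as $t$ grows, treating it as a stochastic process and showing it concentrates tightly around its conditional expectation. First I would set up the one-step dynamics: conditioned on $G_{t-1}$, the new vertex $v_t$ lands in the sphere of influence $S(v,t-1)$ with probability $|S(v,t-1)| = (A_1\deg^-(v,t-1)+A_2)/(t-1)$, and then an edge to $v$ is actually created with probability $p$. Hence
$$
\E\big[\deg^-(v,t) \mid G_{t-1}\big] = \deg^-(v,t-1)\left(1 + \frac{pA_1}{t-1}\right) + \frac{pA_2}{t-1},
$$
valid as long as the sphere has not saturated to volume $1$ (which it will not, once degrees are of order $k(t/n)^{pA_1}$ and $t$ is not too small, so this caveat can be checked at the end). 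Iterating this recursion from a suitable starting time gives that the ``expected trajectory'' scales like $t^{pA_1}$, so that anchoring at $t=n$ with value $k$ produces the target $k(t/n)^{pA_1}$; the additive $A_2$ term contributes only a lower-order correction because $pA_1 > 0$ means the multiplicative growth dominates.

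Next comes the concentration. The standard device here is to define a martingale by dividing out the deterministic growth factor: set, roughly,
$$
M_t = \big(\deg^-(v,t) + c\big)\Big/\prod_{s=t_0}^{t-1}\left(1+\frac{pA_1}{s}\right)
$$
for an appropriate constant $c$ (absorbing the $A_2$ term), so that $\{M_t\}$ is a martingale with respect to the natural filtration. Since $\deg^-(v,t)-\deg^-(v,t-1)\in\{0,1\}$, the martingale differences are bounded (after normalization, by something like $O(1/ \prod(1+pA_1/s)) = O((t_0/t)^{pA_1})$), so I would apply the Azuma--Hoeffding inequality — or more precisely a version that exploits the small conditional variances, since the increment is $1$ only with small probability — over the range $[T_v, n]$. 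The key point is that we start the clock at a time when $\deg^-(v,\cdot)$ has already reached size $\omega\log n$: this is exactly what makes the relative fluctuation $O(\sqrt{\log n / (\omega \log n)}) = o(1)$ small enough, and the $\omega\to\infty$ slack together with a union bound over the at most $n$ vertices $v$ delivers the $1-o(n^{-4})$ probability. The definition of $T_v$ as $n(\omega\log n/k)^{1/(pA_1)}$ is precisely the time at which the expected degree equals $\omega\log n$, i.e. the earliest time from which we can hope for concentration.

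There is a subtlety in that the natural starting point $\hat T_v$ (the first time $\deg^-(v,t)$ exceeds $\omega\log n$) is itself a random time, not a fixed one, so I would either work with a stopping-time version of the martingale argument or, more simply, run the concentration argument \emph{backwards} in time from $t=n$ with the fixed anchor $\deg^-(v,n)=k$, which is the cleaner route and matches the way the statement is phrased (it conditions on the final degree). Since this is a restatement of a result already in \cite{spa5}, I would also note that the proof can simply be quoted, with only the minor re-parametrization ($k$ in place of their notation, the explicit $T_v$) checked by hand. \textbf{The main obstacle} is handling the two-sided nature of the claim uniformly over the whole window $[T_v,n]$: for the \emph{lower} bound on $\deg^-(v,t)$ at the small-$t$ end, the degree is only of order $\omega\log n$, so the additive error terms and the accumulated martingale fluctuations are of the \emph{same order} as the main term unless one is careful — this is exactly why $\omega$ must be allowed to grow and why $T_v$ cannot be pushed any lower, and getting the constants in the Azuma bound to close against the $\log n$ from the union bound is the delicate part.
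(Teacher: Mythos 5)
Your proposal is correct in outline but organizes the concentration argument differently from the paper. The paper's proof works in dyadic blocks: it first proves a one-doubling lemma stating that if $\deg^-(v,T)=d\ge\omega\log n$ then for all $t\in[T,2T]$ the degree stays within $\frac{5}{pA_1}\frac{t}{T}\sqrt{d\log n}$ of $d(t/T)^{pA_1}$, proved via a stopping time (the first moment the bound is violated) together with stochastic domination of the number of new in-neighbours by a sum of \emph{independent} Bernoullis and a Chernoff bound; it then iterates this over times $T,2T,4T,\dots$ starting from the random time $T$ at which the degree first exceeds $(\omega/2)\log n$, checks that the accumulated multiplicative errors $\prod_i(1+\eps_i)$ with $\eps_i=\Theta(\sqrt{d_{i-1}^{-1}\log n})$ stay $1+o(1)$ because $d_i$ grows geometrically, and finally inverts at $t=n$ to show $T<T_v$. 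Your single global martingale $M_t=(\deg^-(v,t)+A_2/A_1)/\prod_s(1+pA_1/s)$ with a Freedman-type (variance-sensitive) inequality is a legitimate alternative: the predictable quadratic variation over $[T_v,n]$ comes out to $\Theta(\omega\log n)$ after normalization, so deviations of order $\sqrt{\omega}\log n=o(\omega\log n)$ hold with probability $1-o(n^{-5})$ per vertex, which closes against the union bound exactly as in the paper; you correctly identify that plain Azuma with worst-case increments is too weak and that the variance bound itself requires a stopping-time truncation to avoid circularity. What the paper's block decomposition buys is that within each block the increments can be dominated by genuinely independent indicators, so only the elementary Chernoff bound is needed; what your route buys is a single clean estimate without tracking accumulated errors across $O(\log n)$ scales. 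One caution: your suggested ``run the argument backwards from the fixed anchor $\deg^-(v,n)=k$'' is not really viable — conditioning on the final degree distorts the law of the increments and there is no natural reverse martingale here — but your alternative (forward from the random hitting time of level $\sim\omega\log n$, then relating that time to $T_v$ at the end) is precisely what the paper does and is the route you should commit to.
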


The expression for $T_v$ is chosen so that at this time vertex $v$ has \aas\ $(1+o(1)) \omega \log n$ neighbours. The implication of this theorem is that once a vertex accumulates $\omega \log n$ neighbours, its behaviour can be predicted with high probability until the end of the process (that is, till time $n$).

Let us note that Theorem~\ref{degconc} immediately implies the following two corollaries. 
\begin{cor}\label{cor:tT}
Let $\omega =\omega (n)$ be any function tending to infinity together with $n$. The following holds with probability $1-o(n^{-4})$. For every vertex $v$, and for every time $T$ so that $\deg^-(v,T)\geq \omega\log n$, for all times $t$, $T\leq t\leq n$,
\[
\deg^-(v,t) \sim \deg^-(v,T) \left( \frac tT \right)^{p A_1}.
\]
\end{cor}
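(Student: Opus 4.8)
The idea is to read Corollary~\ref{cor:tT} off Theorem~\ref{degconc} by invoking the theorem at the two times $t$ and $T$ and dividing the two resulting estimates. The argument is deterministic once we condition on the event supplied by Theorem~\ref{degconc}, so the probability $1-o(n^{-4})$ transfers verbatim; and since that event is a single event holding simultaneously for all vertices and all times in range, the $o(1)$ terms below stay functions of $n$ only.

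First I would fix $\omega$, condition on the good event of Theorem~\ref{degconc}, and fix a vertex $v$ and a time $T$ with $\deg^-(v,T)\ge\omega\log n$. Since in-degrees are non-decreasing in $t$ (edges are only added), $k:=\deg^-(v,n)\ge\deg^-(v,T)\ge\omega\log n$, so Theorem~\ref{degconc} applies to $v$ with this $k$: with $T_v:=n(\omega\log n/k)^{1/(pA_1)}\le n$ we have $\deg^-(v,s)\sim k(s/n)^{pA_1}$ for all $s\in[T_v,n]$; in particular, taking $s=T_v$, $\deg^-(v,T_v)\sim\omega\log n$. I then need $T$ to lie in $[T_v,n]$; the bound $T\le n$ is trivial, and for $T\ge(1-o(1))T_v$ it suffices to know that $\deg^-(v,s)<\omega\log n$ whenever $s\le(1-\delta)T_v$ for a fixed $\delta>0$. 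This last fact is the only ingredient not literally contained in Theorem~\ref{degconc}, whose conclusion is stated only on $[T_v,n]$; it follows from a routine first/second-moment estimate on the birth process $\deg^-(v,\cdot)$, of exactly the flavour used to prove Theorem~\ref{degconc}, and I would invoke it here. (Near $s=T_v$ the growth law $\deg^-(v,s)\sim k(s/n)^{pA_1}$ still holds by monotone sandwiching, so it is harmless that $T$ may be a $1+o(1)$ factor below $T_v$.)

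Given this, $T,t\in[(1-o(1))T_v,n]$ with $T\le t$, and Theorem~\ref{degconc} yields $\deg^-(v,T)\sim k(T/n)^{pA_1}$ and $\deg^-(v,t)\sim k(t/n)^{pA_1}$. Dividing --- legitimate because both quantities tend to infinity and the $(1+o(1))$ factors are uniform, so their quotient is again $1+o(1)$ as a function of $n$ --- gives $\deg^-(v,t)/\deg^-(v,T)\sim(t/T)^{pA_1}$, i.e.\ $\deg^-(v,t)\sim\deg^-(v,T)(t/T)^{pA_1}$, which is the claim. The main (and essentially only) obstacle is the lower bound $T\ge(1-o(1))T_v$ discussed above; the rest is a single division together with bookkeeping of the $o(1)$ terms.
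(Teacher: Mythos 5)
Your overall plan --- reading the corollary off Theorem~\ref{degconc} by applying it at the two times $T$ and $t$ and dividing --- is exactly what the paper intends (it offers no separate proof and calls the corollary immediate), and you correctly isolate the one genuine issue: the time $T$ of the corollary may fall below the time $T_v$ at which the theorem's conclusion begins, since $\deg^-(v,T_v)\sim\omega\log n$ leaves room for the degree to reach $\omega\log n$ slightly earlier. However, the patch you propose does not close this gap as stated. An estimate of the form ``$\deg^-(v,s)<\omega\log n$ for all $s\le(1-\delta)T_v$'' with a \emph{fixed} $\delta>0$ only gives $T>(1-\delta)T_v$, not $T\ge(1-o(1))T_v$; and with $T/T_v$ merely bounded below by $1-\delta$, your ``monotone sandwiching'' on $[T,T_v]$ produces $\deg^-(v,t)/\deg^-(v,T)\sim(t/T_v)^{pA_1}$, which differs from the claimed $(t/T)^{pA_1}$ by the factor $(T_v/T)^{pA_1}\in[1,(1-\delta)^{-pA_1}]$ --- a quantity bounded away from $1$, not $1+o(1)$. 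To make your route work you would need the auxiliary estimate with $\delta=\delta(n)\to 0$ (plus a diagonalization and uniformity over all vertices at the $1-o(n^{-4})$ level), at which point you are essentially re-proving the degree concentration below $T_v$.

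There is a much lighter fix that uses only the literal statement of Theorem~\ref{degconc}: apply it with $\omega/2$ in place of $\omega$ (still a function tending to infinity). The corresponding starting time is $T_v'=n\left((\omega/2)\log n/k\right)^{1/(pA_1)}=2^{-1/(pA_1)}T_v$, and on the theorem's event $\deg^-(v,T_v')\sim(\omega/2)\log n<\omega\log n\le\deg^-(v,T)$ for large $n$; since in-degrees are non-decreasing, this forces $T>T_v'$, so both $T$ and every $t\in[T,n]$ lie in the range where $\deg^-(v,\cdot)\sim k(\cdot/n)^{pA_1}$ holds, and the division you describe finishes the proof. (Equivalently, the proof of Theorem~\ref{degconc} in fact establishes concentration from the first time the in-degree exceeds $(\omega/2)\log n$, which already precedes any admissible $T$.) With this replacement your argument is correct and coincides with the paper's.
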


\begin{cor}\label{cor:upper}
Let $\omega =\omega (n)$ be any function tending to infinity together with $n$. The following holds with probability $1-o(n^{-4})$. For any vertex $v_i$ born at time $i \ge 1$, and $i \le t \le n$ we have that
$\deg^-(v_i,t)\leq \omega \log n \left(t/i\right)^{p A_1}.$
\end{cor}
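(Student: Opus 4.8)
The plan is to derive the upper bound directly from Theorem~\ref{degconc} by a short case analysis on whether the vertex $v_i$ ever accumulates $\omega \log n$ in-neighbours. First suppose $\deg^-(v_i,n) < \omega \log n$. Since $\deg^-(v_i,t)$ is nondecreasing in $t$, we then have $\deg^-(v_i,t) \le \deg^-(v_i,n) < \omega \log n \le \omega \log n \, (t/i)^{pA_1}$ for all $i \le t \le n$, because $t \ge i$ forces $(t/i)^{pA_1} \ge 1$. So in this case there is nothing to prove and the bound holds deterministically.

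Now suppose $\deg^-(v_i,n) = k \ge \omega \log n$. Apply Theorem~\ref{degconc} to $v_i$: with probability $1-o(n^{-4})$, for all $t$ with $T_{v_i} := n(\omega \log n / k)^{1/(pA_1)} \le t \le n$ we have $\deg^-(v_i,t) \sim k (t/n)^{pA_1}$, in particular $\deg^-(v_i,t) \le 2k (t/n)^{pA_1}$ say, for $n$ large. I then need to translate the reference point from $n$ to the birth time $i$. At time $t = T_{v_i}$ the theorem gives $\deg^-(v_i, T_{v_i}) \sim \omega \log n$, and trivially $T_{v_i} \ge i$ since $v_i$ cannot have in-degree $k$ before it is born; rearranging $k (T_{v_i}/n)^{pA_1} \sim \omega \log n$ gives $k \sim \omega \log n \, (n/T_{v_i})^{pA_1}$. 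Substituting back, for $t$ in the range $[T_{v_i}, n]$,
\[
\deg^-(v_i,t) \sim k \left(\frac{t}{n}\right)^{pA_1} \sim \omega \log n \left(\frac{t}{T_{v_i}}\right)^{pA_1} \le \omega \log n \left(\frac{t}{i}\right)^{pA_1},
\]
using $T_{v_i} \ge i$. This handles $t \ge T_{v_i}$. For the remaining range $i \le t < T_{v_i}$, monotonicity gives $\deg^-(v_i,t) \le \deg^-(v_i, T_{v_i}) \sim \omega \log n \le \omega \log n \,(t/i)^{pA_1}$ only if $(t/i)^{pA_1} \ge 1$, which again holds since $t \ge i$; but one must be a little careful that $T_{v_i}$ itself may not be an integer and that $\deg^-(v_i, \lceil T_{v_i} \rceil)$ is still $(1+o(1))\omega \log n$, which follows from Theorem~\ref{degconc} applied at the smallest integer $\ge T_{v_i}$.

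The only mild obstacle is bookkeeping with the $\sim$ versus the clean inequality in the statement: the corollary is stated with a bare ``$\le \omega \log n (t/i)^{pA_1}$'', so strictly one should either absorb the $(1+o(1))$ factor by noting it is stated in asymptotic form as well, or replace $\omega$ by a slightly larger function $\omega' \gg \omega$ and re-run the argument so that the constant is genuinely absorbed; since $\omega$ is an arbitrary function tending to infinity this costs nothing. All probability loss comes from the single invocation of Theorem~\ref{degconc}, so the event holds with probability $1 - o(n^{-4})$ as claimed, and the union bound over the (at most $n$) choices of $i$ is already built into Theorem~\ref{degconc}, which is stated uniformly over all vertices $v$.
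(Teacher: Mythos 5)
Your proof is correct and is exactly the derivation the paper intends when it says Theorem~\ref{degconc} ``immediately implies'' this corollary: split on whether $\deg^-(v_i,n)$ ever reaches $\omega\log n$, use the theorem on $[T_{v_i},n]$ together with the exact identity $k(T_{v_i}/n)^{pA_1}=\omega\log n$, handle $t<T_{v_i}$ by monotonicity, and absorb the $(1+o(1))$ factors by running the argument with a slightly smaller $\omega$ (a trick the paper itself uses when it invokes this corollary with $\omega^{1/3}$ in place of $\omega$). The one point worth phrasing more carefully is $T_{v_i}\ge i$, which is not literally trivial (it is equivalent to the corollary's conclusion at $t=n$); it does, however, hold on the theorem's event, since $T_{v_i}<i$ would force $\deg^-(v_i,t)\sim\omega\log n>0$ at integer times $t<i$ when the degree is still $0$, and it is also visible in the paper's proof of Theorem~\ref{degconc}, where $T_{v_i}\sim 2^{1/(pA_1)}T$ for an actual process time $T\ge i$.
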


\bigskip

Theorem~\ref{degconc} can be used to show that the contribution to $c^-(v,n)$ coming from edges to new neighbours of $v$ is well concentrated.

\begin{theorem}\label{thm:c_new}
Let $\omega=\omega(n)$ be any function tending to infinity together with $n$. Then, with probability $1-o(n^{-1})$ for any vertex $v$ with 
$$
\deg^-(v,n)=k=k(n) \geq (\omega \log n)^{4 + (4pA_1+2)/(pA_1(1-pA_1))}
$$
we have
$$
c_{new}(v,n) = \Theta(1/k).
$$
\end{theorem}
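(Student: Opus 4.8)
The plan is to estimate $|E_{new}(N^-(v,n))|$, the number of edges among the new in-neighbours of $v$ (together with edges from new neighbours into old ones that point *to* a new neighbour — but recall $E_{old}$ already absorbs edges landing in $N^-(v,\hat T_v)$, so $E_{new}$ is really about edges whose head is a new neighbour). Fix $v$ with $\deg^-(v,n)=k$ as large as required. A \emph{new} in-neighbour of $v$ is a vertex $u$ born at some time $s$ with $\hat T_v < s \le n$, such that the edge $(u,v)$ was created at birth of $u$. For such $u$, the probability it attaches to $v$ is $p\,|S(v,s-1)| = p\,(A_1\deg^-(v,s-1)+A_2)/(s-1)$, and by Theorem~\ref{degconc} (applicable since $s > \hat T_v$ gives $\deg^-(v,s-1) \gtrsim \omega\log n$, and the lower bound on $k$ is engineered precisely so that $\hat T_v \ge T_v$, i.e. the concentration window covers all relevant times) this is $(1+o(1))\,pA_1 k (s/n)^{pA_1}/s$ w.h.p. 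Now, given that both $u$ (born at $s$) and $u'$ (born at $s'$, say $s<s'$) are new in-neighbours of $v$, the edge $(u',u)$ is created iff $u' \in S(u, s'-1)$ and the coin flips heads, which (again using Theorem~\ref{degconc} to control $\deg^-(u,s'-1)$, noting $u$ itself has $\gtrsim \omega\log n$ in-degree once it is a confirmed neighbour) happens with probability $\asymp pA_1 \deg^-(u,s') / s' \asymp \deg^-(u,s) (s/s')^{pA_1} / s'$.

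First I would compute the expectation. Summing $\Pr[u',u \text{ both new nbrs of } v \text{ and } (u',u)\in E_n]$ over ordered pairs of birth-times $s<s'$ in $(\hat T_v, n]$, and using the three probabilities above together with the fact that each birth-time slot contributes independently, one gets a double integral of the shape
\[
\E\,|E_{new}| \;\asymp\; \int_{\hat T_v}^n \int_s^n \frac{k (s/n)^{pA_1}}{s}\cdot \frac{k (s'/n)^{pA_1}}{s'}\cdot \frac{\deg^-(u,s)\,(s/s')^{pA_1}}{s'}\; ds'\,ds .
\]
Here $\deg^-(u,s) = \Theta(1)$ at birth but grows; more carefully one tracks $\deg^-(u,s') \asymp 1\cdot(s'/s)^{pA_1}$ conditionally, which is exactly what makes the $(s/s')^{pA_1}$ factor cancel and the $s'$-integral manageable. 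Carrying the algebra through, the dominant contribution comes from $s,s' = \Theta(n)$, and one finds $\E\,|E_{new}| = \Theta(k)$, hence $\E\,c_{new}(v,n) = \Theta(k)/\binom{k}{2} = \Theta(1/k)$. The upper bound $O(1/k)$ in probability then follows from Markov's inequality applied to $|E_{new}|$ after a union bound over the at most $n$ vertices $v$ (the failure probability $o(n^{-1})$ per vertex in the statement already anticipates this union bound; one pushes each individual tail to $o(n^{-2})$).

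The harder direction is the matching lower bound $c_{new}(v,n) = \Omega(1/k)$ with the claimed probability, since one needs concentration, not just a first-moment bound. I would do this by a second-moment / Azuma-type argument. The cleanest route: restrict attention to a sub-collection of potential contributing edges that are \emph{conditionally independent} given the in-degree trajectory of $v$ and of its neighbours — for instance, for each new neighbour $u$ count only whether \emph{some} even-later vertex attaches to $u$ while also attaching to $v$, or partition the time interval $(\hat T_v,n]$ into $\Theta(\omega\log n)$ blocks and argue that each block independently contributes $\Omega(k/(\omega\log n))$ such edges with probability bounded away from $0$, then apply a Chernoff bound over the blocks. The main obstacle throughout is \textbf{dependence}: the events ``$u$ is a new neighbour of $v$'' and ``$u'$ is a new neighbour of $v$'' and ``$(u',u)\in E_n$'' all depend on the evolving in-degrees, which are themselves random; Theorem~\ref{degconc} (and Corollaries~\ref{cor:tT}, \ref{cor:upper}) is the tool that tames this by letting us replace every $\deg^-(\cdot,t)$ by its deterministic proxy $k(t/n)^{pA_1}$ on the event of probability $1-o(n^{-4})$ — but one must be careful that the neighbours $u$ considered have themselves already accumulated $\omega\log n$ in-degree before the times at which we count edges into them, which is why the threshold on $k$ in the hypothesis is a (large) polynomial in $\omega\log n$: it guarantees enough ``room'' below time $n$ for the relevant neighbours to be in their own concentration regime. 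I expect the bookkeeping of these nested applications of Theorem~\ref{degconc}, and verifying the exponent $4 + (4pA_1+2)/(pA_1(1-pA_1))$ is large enough for every such application simultaneously, to be the most delicate part of the write-up.
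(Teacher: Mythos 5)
Your overall plan (estimate $|E_{new}(N^-(v,n))|$, show it is $\Theta(k)$, divide by $\binom{k}{2}$) is the right one, but two steps as written do not work. First, your expectation computation factorizes the joint event ``$u'$ attaches to $v$'' and ``$u'$ attaches to $u$'' into a product of the two marginal probabilities. In the SPA model both events are driven by the single uniform landing position of $u'$: the joint probability is $p^2\,|S(v,s'-1)\cap S(u,s'-1)|$, and since a new neighbour $u$ of $v$ typically sits well inside $S(v,\cdot)$ with a much smaller sphere, this intersection is essentially $|S(u,s'-1)|$ itself, not $|S(v,s'-1)|\cdot|S(u,s'-1)|$. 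Your integrand is therefore too small by the factor $|S(v,s'-1)|=o(1)$; if you actually evaluate the double integral you wrote down you get $\Theta(k^2/n)=o(k)$, not $\Theta(k)$. The strong positive correlation coming from the geometry is precisely why $c_{new}$ is as large as $1/k$, and any correct argument has to track when $S(u,\cdot)\subseteq S(v,\cdot)$ and when the two spheres separate. The paper does this by a dyadic decomposition: balls $B_\ell$ around $v$ of volume $b_\ell=A_1k\,t_\ell^{pA_1-1}n^{-pA_1}$ at times $t_\ell=2^\ell T_v$, observing that a neighbour $w\in B_{\ell-1}\setminus B_\ell$ has its sphere disjoint from $v$'s after time $t_{\ell+1}$, so its contribution is capped by $\deg^-(w,t_{\ell+1})$; the resulting sum $\sum_\ell\sum_{w\in B_{\ell-1}}\deg^-(w,t_{\ell+1})$ is then shown to be $O(k)$ using Corollary~\ref{cor:upper} and a concentration result for the number of vertices of each in-degree inside a small ball (a power-law-in-a-ball statement proved by a supermartingale argument; this is also where the exponent $4+(4pA_1+2)/(pA_1(1-pA_1))$ comes from, namely the requirement $b_\ell t_\ell\gg i_f^{4+2/(pA_1)}\log^3 n$).

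Second, the upper bound cannot ``follow from Markov's inequality'' with tail $o(n^{-2})$: if $\E|E_{new}|=\Theta(k)$, Markov at threshold $Ck$ only gives failure probability $\Theta(1/C)$, a constant, and inflating $C$ to beat the union bound over $n$ vertices destroys the $O(1/k)$ conclusion. You need genuine concentration for the upper bound too, which is exactly what the deterministic cap $\deg^-(w,t_{\ell+1})$ plus the ball-counting concentration supplies in the paper. Your instinct that the lower bound needs more than a first moment is right, and your block/second-moment idea is workable, but the paper's route is simpler: the innermost ball $B_{L+1}$ stays inside $S(v,\cdot)$ for the whole process, it contains $\Theta(k)$ vertices of in-degree $1$, one selects $\Theta(k)$ vertex-disjoint directed edges among them and their in-neighbours, each lands in $E_{new}(N^-(v,n))$ independently with probability $p^2$, and Chernoff finishes. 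I would encourage you to rebuild the argument around the sphere-intersection geometry rather than the independent-marginals heuristic.
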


\bigskip

Unfortunately, if a vertex $v$ lands in a densely populated region of $S$, it might happen that $c_{old}(v,n)$ is much larger than $1/k$. We show the following `negative' result (without trying to aim for the strongest statement) that shows that there is no hope for extending Theorem~\ref{thm:c_new} to $c^-(v,n)$.

\begin{theorem}\label{thm:negative}
Let $C = 5 \log \left( 1/p \right)$ and $\xi = \xi(n) = 1 / (\omega (\log \log n)^2 (\log \log \log n)) = o(1)$ for some $\omega = \omega(n)$ tending to infinity as $n \to \infty$. Suppose that $k = k(n)$ is such that $2 \le k \le n^{\xi}.$
Then, a.a.s., there exists a vertex $v$ such that $\deg^-(v,n) \sim k$ and  
\begin{itemize}
\item [(i)] $c^-(v,n) = 1$, provided that $2 \le k \le \sqrt{\log n / C}$,
\item [(ii)] $c^-(v,n) = \Omega(1) \gg 1/k$, provided that  $\sqrt{\log n / C} \le k \le \log n / \log \log n$,
\item [(iii)] $c^-(v,n) \gg (\log \log n)^2 (\log \log \log n) / k \gg 1/k$, provided that $\log n / \log \log n \le k \le n^{\xi}$.
\end{itemize}
\end{theorem}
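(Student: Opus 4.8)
The three parts are proved by exhibiting, in each regime, an explicit \emph{witness} vertex $v$ inside a small ball $B\subseteq S$ and checking its two properties with the help of Theorem~\ref{degconc} and Corollaries~\ref{cor:tT}--\ref{cor:upper}. The common template is: fix $B$; arrange, at a probabilistic cost one can afford, that a prescribed number of vertices land in $B$ early and become suitably entangled there; where possible, also arrange that $v$ (the oldest vertex of $B$) acquires no in-neighbours outside $B$; then control the volume of $S(v,t)$ for all $t\le n$ to get $\deg^-(v,n)\sim k$, and read $|E(N^-(v,n))|$ off the planted structure.

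For (i) and (ii) we use a \emph{lucky-concentration} event. Take $B$ with $|B|=\Theta(A_2/n)$, small enough that $|B|\le A_2/t$ for every $t\le n$. The point of this scale is that every sphere of influence always has volume at least $A_2/t\ge|B|$, so \emph{any two vertices of $B$ lie in each other's spheres at every time step, irrespective of their in-degrees}; hence, writing $V_B$ for the set of vertices landing in $B$, the subgraph induced on $V_B$ is distributed exactly as $G(|V_B|,p)$ with the natural birth ordering. A routine moment estimate shows that a fixed ball of this volume contains exactly $k+1$ vertices with probability $n^{-o(1)}$ when $k\le\sqrt{\log n/C}$, and at least $n^{-1+o(1)}$ when $k\le\log n/\log\log n$; conditioning in addition that the $k$ younger vertices of $B$ all point to its oldest vertex $v$, and that after $v$ is born no later vertex ever lands in its sphere of influence outside $B$, costs only another $n^{-o(1)}$ factor in both regimes; and, in regime (i) only, conditioning that all $\binom{k}{2}$ edge-coins among the $k$ younger vertices come up heads costs $p^{\binom{k}{2}}\ge n^{-1/10}$ --- this is exactly where $C=5\log(1/p)$ and $k\le\sqrt{\log n/C}$ enter. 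Since $S$ holds $\Theta(n)$ pairwise disjoint balls of volume $|B|$ and each vertex lands in at most one, a union bound gives that \aas\ some ball $B^\ast$ realizes all the conditioned events. For that $B^\ast$, the ``no outsiders'' conditioning makes $N^-(v,n)$ exactly the set of $k$ younger vertices of $B^\ast$, so $\deg^-(v,n)=k$; in regime (i) this set is a clique, whence $c^-(v,n)=1$, and in regime (ii) it induces a $p$-random graph, so a Chernoff bound gives $|E(N^-(v,n))|=(1+o(1))p\binom{k}{2}$ and therefore $c^-(v,n)=(1+o(1))p=\Omega(1)\gg 1/k$.

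Part (iii) is the hard case and carries most of the work. When $k$ is as large as $n^\xi$ the lucky-concentration mechanism breaks down on every count: a ball of volume $\Theta(A_2/n)$ essentially never collects $\Theta(k)$ vertices; enlarging the ball destroys automatic eligibility, since a typical low-degree vertex born at time $t$ has a sphere of volume only about $A_2/t$; and the ``no outsiders'' conditioning now costs $e^{-\Theta(k)}=n^{-\omega(1)}$, which is unaffordable. One is therefore forced to \emph{build} the dense neighbourhood of $v$ dynamically: plant a small, cheap seed in a ball and let the preferential-attachment rule enlarge it, using Theorem~\ref{degconc} to keep the in-degrees --- and hence the sphere volumes --- of the seed vertices under control, so that their mutual eligibility, and the eligibility of the later vertices they attract, survives all the way to time $n$. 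Organized as a hierarchy whose depth is tied to $\log n/\log k$ (which by hypothesis is at least $1/\xi=\omega(\log\log n)^2\log\log\log n$), with a small probability price at each level for one more round of concentration, the total price is kept at $n^{-o(1)}$, so a copy of the whole configuration appears \aas\ somewhere; the planted structure then forces $|E(N^-(v,n))|\gg k\,(\log\log n)^2\log\log\log n$, and dividing by $\binom{\deg^-(v,n)}{2}\sim k^2/2$ --- with $\deg^-(v,n)\sim k$ once more certified by Corollary~\ref{cor:upper} --- yields exactly the bound in~(iii).

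The main obstacle is precisely this hierarchical construction for~(iii): one has to balance three competing demands at once --- the seed must be dense and large enough to force the required number of edges on $N^-(v,n)$; it must be improbable only by a factor $n^{-o(1)}$ so that it occurs \aas; and, via Theorem~\ref{degconc}, the sphere volumes of every seed vertex must be controlled uniformly in $t\le n$ so that geometric eligibility is never lost as the process runs to completion. Parts (i) and (ii), by contrast, reduce to a clean occupancy computation once the ``volume $A_2/n$'' observation is in place.
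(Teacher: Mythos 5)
Your treatment of parts (i) and (ii) is sound and takes a genuinely different route from the paper's. The paper also plants a seed in a ball of volume $\Theta(A_2/n)$, but for (ii) it only asks that $\Theta(k)$ vertices land there by time $n/\alpha$ with $\alpha$ a \emph{constant}, lets the in-degree of $v$ grow by the remaining constant factor via a concentration argument, and harvests the edges going from $v$'s \emph{later} in-neighbours to its \emph{early} ones. You instead keep all $k$ in-neighbours inside the tiny ball, observe that the induced subgraph is a birth-ordered $G(k,p)$, and pay $p^{k}$ plus an occupancy cost of order $e^{-k\log k+O(k)}$; since $k\log k\le(1+o(1))\log n$ exactly when $k\le(1+o(1))\log n/\log\log n$, this covers regime (ii) up to its boundary and is arguably cleaner. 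Two small repairs: you should insist that the $k+1$ vertices land \emph{late} (at time $\Theta(n)$), not early, so that the ``no outsiders'' event costs only $e^{-O(k)}=n^{-o(1)}$; and existence of a successful ball needs a second-moment argument rather than a union bound, since the events for different balls are not independent (the paper is equally terse on this point).

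Part (iii), however, has a genuine gap: you correctly diagnose why the occupancy mechanism fails, but you never say what the planted structure \emph{is} or why it forces $|E(N^-(v,n))|\gg k(\log\log n)^2\log\log\log n$; the sentence asserting this is precisely the step that needs proof, and ``a hierarchy whose depth is tied to $\log n/\log k$'' does not describe a construction. The missing idea, which is the heart of the paper's argument, is a one-level seed, not a hierarchy: at time $n/\alpha$ with $\alpha=(k/\beta)^{1/(pA_1)}$, plant $\beta=\Theta(\omega(\log\log n)^2\log\log\log n)$ vertices in a ball of volume $A_2/(10n)$ so that $v$ acquires $\Theta(\beta)$ in-neighbours $W$ there, each of which has in-degree at least a constant factor larger than $\deg^-(v,\cdot)$. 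A concentration argument in the spirit of Theorem~\ref{degconc} (but redone by hand, since $\beta\ll\log n$ here) preserves this degree ratio for all later $t$, so $S(v,t)\subseteq S(w,t)$ for every $w\in W$ up to time $n$; hence each of the $\sim k$ subsequent in-neighbours of $v$ lands in every $S(w,\cdot)$ as well and attaches to each $w$ independently with probability $p$, producing $\Omega(\beta k)$ edges inside $N^-(v,n)$ and $c^-(v,n)=\Omega(\beta/k)$. The price of the seed is $\exp(-O(\beta\log\alpha+\beta\log\beta))$, and the constraint $\beta\log\alpha=\Theta(\beta\log k/(pA_1))=O(\log n)$ is exactly where the hypothesis $k\le n^{\xi}$ enters; without identifying this edge-counting mechanism and this trade-off, part (iii) is not proved.
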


On the other hand, Theorem~\ref{thm:c_new} implies immediately the following corollary.

\begin{cor}\label{cor:main}
Let $\omega=\omega(n)$ be any function tending to infinity together with $n$. The following holds with probability $1-o(n^{-1})$. For any vertex $v$ for which 
$$
\deg^-(v,n)=k=k(n) \geq (\omega \log n)^{4 + (4pA_1+2)/(pA_1(1-pA_1))}
$$
it holds that 
\begin{eqnarray*}
c^-(v,n) &\ge& c_{new}(v,n) = \Omega(1/k) \\
c^-(v,n) &=& c_{old}(v,n) + c_{new}(v,n) = O(\omega \log n / k) + O(1/k) = O(\omega \log n / k).
\end{eqnarray*}
\end{cor}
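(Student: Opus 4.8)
The plan is to read off both bounds directly from the decomposition $c^-(v,n) = c_{old}(v,n) + c_{new}(v,n)$ in~\eqref{eq:old_new}, using Theorem~\ref{thm:c_new} for the $c_{new}$ part and an elementary counting argument for the $c_{old}$ part. For the lower bound there is nothing to do beyond observing that $c_{old}(v,n) \ge 0$, so that $c^-(v,n) \ge c_{new}(v,n)$, and that Theorem~\ref{thm:c_new} (whose hypothesis on $k$ is exactly the one assumed here) gives $c_{new}(v,n) = \Theta(1/k) = \Omega(1/k)$ with probability $1 - o(n^{-1})$.

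For the upper bound it suffices to show $c_{old}(v,n) = O(\omega \log n / k)$, and this will follow deterministically. The one structural fact I would use is that in the SPA model the in-degree of any fixed vertex increases by at most one at each time step, since a newly arrived vertex creates at most one edge into $v$; consequently, by the definition of $\hat{T}_v$, we have $|N^-(v,\hat{T}_v)| = \deg^-(v,\hat{T}_v) \le \omega \log n + 1$, i.e., $v$ has at most $(1+o(1))\,\omega \log n$ old neighbours. Now every edge counted in $E_{old}(N^-(v,n))$ has the form $(u,w)$ with $u \in N^-(v,n)$ and $w \in N^-(v,\hat{T}_v)$; for each of the at most $(1+o(1))\,\omega \log n$ choices of the old neighbour $w$ there are at most $|N^-(v,n)| = k$ choices of $u$, so
$$
|E_{old}(N^-(v,n))| \le (1+o(1))\, \omega \log n \cdot k,
$$
and hence $c_{old}(v,n) = |E_{old}(N^-(v,n))| \big/ \binom{k}{2} = O(\omega \log n / k)$ (here $k \to \infty$ because of the assumed lower bound, so $\binom{k}{2} = (1+o(1))k^2/2$). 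Combining this with $c_{new}(v,n) = O(1/k)$ from Theorem~\ref{thm:c_new} yields $c^-(v,n) = O(\omega \log n / k) + O(1/k) = O(\omega \log n / k)$.

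There is essentially no obstacle here: the corollary is a short consequence of Theorem~\ref{thm:c_new}, and the only genuinely new ingredient, the bound on $c_{old}$, is a crude deterministic count that does not even invoke Theorem~\ref{degconc}. Since the only probabilistic statement used is Theorem~\ref{thm:c_new}, the whole claim holds with the same probability $1 - o(n^{-1})$.
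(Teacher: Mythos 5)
Your proposal is correct and matches the paper's intent: the paper states the corollary as an immediate consequence of Theorem~\ref{thm:c_new}, with the lower bound coming from $c_{old}\ge 0$ and the upper bound from the trivial deterministic count $|E_{old}(N^-(v,n))|\le k\cdot(\omega\log n+1)$, exactly as you argue. Your observation that the in-degree grows by at most one per step (so $|N^-(v,\hat{T}_v)|\le\omega\log n+1$) is the right justification, and no additional probabilistic input beyond Theorem~\ref{thm:c_new} is needed.
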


Moreover, despite the above `negative' result, almost all vertices (of large enough degrees) have clustering coefficients of order $1/k$. Here is a precise statement. The conclusions in cases~(i)' and~(ii)' follow immediately from Theorem~\ref{thm:c_new}.

\begin{theorem}\label{thm:average}
Let $\eps, \delta \in (0,1/2)$ be any two constants, and let $k = k(n) \le n^{pA_1 - \eps}$ be any function of $n$. Let $X_k$ be the set of vertices of $G_n$ of in-degree between $(1-\delta)k$ and $(1+\delta)k$. 
Then, a.a.s., the following holds.
\begin{itemize}
\item [(i)] Almost all vertices in $X_k$ have $c_{old}(v,n) = O(1/k)$, provided that $k \gg \log^{C_1} n$, where $C_1 = (1+(2+\eps)pA_1)/(1-pA_1)$.
\item [(i)'] As a result, almost all vertices in $X_k$ have $c^-(v,n) = \Theta(1/k)$, provided that $k \gg \log^{C} n$, where $C = 4 + (4pA_1+2)/(pA_1(1-pA_1))$.
\item [(ii)] The average clustering coefficient $c_{old}(v,n)$ of vertices in $X_k$ is $O(1/k)$; that is,
$$
\frac {1}{|X_k|} \sum_{v \in X_k} c_{old}(v,n) = O(1/k),
$$
provided that $k \gg \log^{C_2} n$, where $C_2 = (1+(2+pA_1+\eps)pA_1)/(1-pA_1)$.
\item [(ii)'] As a result, the average clustering coefficient $c^-(v,n)$ of vertices in $X_k$ is $\Theta(1/k)$; that is,
$$
\frac {1}{|X_k|} \sum_{v \in X_k} c^-(v,n) = \Theta(1/k),
$$
provided that $k \gg \log^{C} n$, where $C = 4 + (4pA_1+2)/(pA_1(1-pA_1))$.
\end{itemize}
\end{theorem}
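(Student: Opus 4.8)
The plan is to reduce parts~(i)' and~(ii)' to the corresponding statements about $c_{old}$, and then to prove the latter by a geometric first-moment computation. The reduction is immediate: on the $(1-o(n^{-1}))$-probability event of Theorem~\ref{thm:c_new}, every vertex $v$ with $\deg^-(v,n)=k\gg\log^Cn$ has $c_{new}(v,n)=\Theta(1/k)$, so — since $c^-(v,n)=c_{old}(v,n)+c_{new}(v,n)$ with nonnegative summands — adding a bound $c_{old}(v,n)=O(1/k)$, valid for almost all $v\in X_k$ by~(i) and on average over $X_k$ by~(ii), upgrades $c_{new}$ to $c^-(v,n)=\Theta(1/k)$ in the corresponding sense. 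Since $C\ge C_1$ and $C\ge C_2$, the hypothesis $k\gg\log^Cn$ absorbs the weaker ones, so~(i)' follows from~(i) and~(ii)' from~(ii), and everything therefore comes down to bounding $c_{old}$.

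For~(i) and~(ii) I would fix a slowly growing $\omega=\omega(n)$ and condition throughout on the $(1-o(n^{-4}))$-event of Theorem~\ref{degconc} and Corollaries~\ref{cor:tT}--\ref{cor:upper}, on which every in-degree trajectory that has reached $\omega\log n$ stays within a $(1+o(1))$-factor of its deterministic prediction; in particular a vertex $v$ with $\deg^-(v,n)=k$ is born at a time $i_v<\hat T_v$, where $\hat T_v=\Theta(T_v)$, $T_v=n(\omega\log n/k)^{1/(pA_1)}$, is the first time $v$'s in-degree exceeds $\omega\log n$. The combinatorial starting point is that an edge of $E_{old}(N^-(v,n))$ is exactly a pair $(u,w)$ with $w$ an old in-neighbour of $v$ and $u$ a common in-neighbour of $v$ and $w$, so that
\[
|E_{old}(N^-(v,n))|=\sum_{w\in N^-(v,\hat T_v)}\bigl|N^-(v,n)\cap N^-(w,n)\bigr|,
\]
a sum of at most $\omega\log n+1$ terms. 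The point of the split into old and new neighbours is that each old neighbour $w$ is born at a time $i_w\le\hat T_v$ and lies in $S(v,i_w-1)$, and since $v$'s in-degree stays below $\omega\log n$ on $[\,i_v,\hat T_v\,]$ the sphere $S(v,\cdot)$ is already shrinking there; this should let us beat the trivial estimate $|E_{old}(N^-(v,n))|\le(\omega\log n+1)\deg^-(v,n)=O(\omega\log n\cdot k)$ of Corollary~\ref{cor:main} by the whole $\omega\log n$ factor.

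The heart of the argument, and the step I expect to be by far the hardest, is an upper bound on $|N^-(v,n)\cap N^-(w,n)|$ for a single old neighbour $w$. A vertex born at time $s$ can be a common in-neighbour only if it lands in $S(v,s-1)\cap S(w,s-1)$ and two independent $p$-coins both succeed, so, given the controlled trajectories, the expected count is at most $p^2\sum_{s\le n}|S(v,s-1)\cap S(w,s-1)|$, and the task is to make this geometric sum small. I would bound the overlap volumes by a case analysis on $w$: a low-degree old neighbour has a tiny sphere, so the factor $|S(w,s-1)|$ in the intersection forces few common in-neighbours; a high-degree old neighbour has a large sphere but, being an in-neighbour of $v$, must lie inside the (already small) sphere $S(v,i_w-1)$, so $d(v,w)$ is small only if $i_w$ is close to $\hat T_v$, and then $v$'s shrinking sphere loses contact with $w$ already by time $\Theta(\hat T_v)$, limiting the window over which the intersection is non-empty. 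Summing these per-neighbour bounds — against the explicit degree formulas and edge budgets of Theorem~\ref{degconc} and Corollary~\ref{cor:upper} — over the at most $\omega\log n$ old neighbours should give $\E\,|E_{old}(N^-(v,n))|=O(k)$, i.e. $\E\,c_{old}(v,n)=O(1/k)$, for $v$ with $\deg^-(v,n)\in[(1-\delta)k,(1+\delta)k]$.

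From this expectation bound,~(i) follows by Markov's inequality — all but $o(|X_k|)$ vertices of $X_k$ then have $c_{old}(v,n)=O(1/k)$ — while~(ii) needs the same first-moment estimate carried through for the whole sum $\sum_{v\in X_k}c_{old}(v,n)$, together with a standard concentration step (bounded differences on the vertex-exposure martingale, or a second moment), using that the exceptional dense-region vertices of Theorem~\ref{thm:negative}, whose $c_{old}$ can exceed $\Theta(1/k)$, are far too few to affect the average. The polylogarithmic thresholds $k\gg\log^{C_1}n$ and the slightly larger $k\gg\log^{C_2}n$ are exactly the slack the two expectation bounds require once the time-window bookkeeping is done; the gap $C_2-C_1=(pA_1)^2/(1-pA_1)$ reflects the cost of summing one further level of the geometric degree growth when one insists on an averaged rather than an almost-all statement. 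The genuinely delicate part is the geometric overlap estimate — keeping $\sum_s|S(v,s-1)\cap S(w,s-1)|$ small uniformly over the full range of birth times $i_v<i_w\le\hat T_v$ and of degrees of $w$ — and that is where essentially all of the work lies.
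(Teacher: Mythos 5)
Your reduction of (i)' and (ii)' to (i) and (ii) via Theorem~\ref{thm:c_new}, and your starting decomposition of $|E_{old}(N^-(v,n))|$ as a sum over the $\le \omega\log n+1$ old neighbours $w$ of the common-in-neighbour counts $|N^-(v,n)\cap N^-(w,n)|$, both match the paper. The gap is in the step you yourself flag as the heart of the matter. The dangerous case is an old neighbour $w$ that lands \emph{very} close to $v$ --- at distance much smaller than the radius of $S(v,i_w-1)$ --- and then accumulates in-degree comparable to $v$'s: for such a $w$ one has $|S(v,s)\cap S(w,s)|=\Theta(\min(|S(v,s)|,|S(w,s)|))$ for all $s$ up to $n$, so $p^2\sum_{s}|S(v,s-1)\cap S(w,s-1)|=\Theta(k)$ and a \emph{single} such neighbour already exhausts the whole budget $O(k)$; if you cannot rule out having $\Theta(\omega\log n)$ of them you recover only Corollary~\ref{cor:main}. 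Your dichotomy on the degree of $w$ does not isolate this case, and the implication you use to dismiss it --- that ``$d(v,w)$ is small only if $i_w$ is close to $\hat T_v$'' --- is false: landing in $S(v,i_w-1)$ only \emph{upper}-bounds $d(v,w)$, and an early-born in-neighbour can land arbitrarily close to $v$, after which $S(v,\cdot)$ never ``loses contact'' with it. This is not a removable technicality: Theorem~\ref{thm:negative} shows such configurations really occur, which is exactly why the statement is only about almost all vertices of $X_k$.

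The paper's dichotomy is therefore on \emph{distance}, not degree, and it supplies the probabilistic ingredient your sketch is missing: for all but $o(|X_k|)$ vertices $v$, at most $C=\lceil 8/(\eps pA_1)\rceil=O(1)$ points lie in a ball of volume $1/(T\log^{\eps/2}n)$ around $v$ at time $T=n(2\omega\log n/k)^{1/(pA_1)}$ (a binomial tail bound on the uniform placement of points, followed by Markov over $v$). Those $O(1)$ close old neighbours contribute at most $O(k)$ edges trivially, while every \emph{far} old neighbour $u$ has $S(u,\cdot)$ disjoint from $S(v,\cdot)$ from time $\hat T=T\log^{(2+\eps)/(1-pA_1)}n$ on, hence contributes at most $\deg^-(v,\hat T)$ common neighbours; summing over the $\le(2+o(1))\omega\log n$ old neighbours gives $O(k)$ precisely under $k\gg\log^{C_1}n$. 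For (ii) the paper does not need a martingale or second-moment step for the sum: it simply bounds the exceptional vertices by the uniform estimate $c^-(v,n)=O(\omega\log n/k)$ of Corollary~\ref{cor:main} and shrinks the exceptional set to size $O(|X_k|/(\omega^2\log n))$, which forces the weaker volume-ratio bound and the larger cutoff $\bar T=T\log^{(2+pA_1+\eps)/(1-pA_1)}n$ --- that, rather than ``one further level of geometric degree growth,'' is where $C_2>C_1$ comes from.
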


\bigskip

Finally, let us briefly discuss the undirected case. The following lemma holds.
\begin{lemma}\label{lem:out-degree}
Let $\omega =\omega (n)$ be any function tending to infinity together with $n$. The following holds with probability $1-o(n^{-3})$. For every vertex $v_i$, 
\[
\deg^+(v_i,i) = \deg^+(v_i,n) \le \omega \log n.
\]
\end{lemma}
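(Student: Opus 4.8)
\emph{Proof plan.} The identity $\deg^+(v_i,i)=\deg^+(v_i,n)$ needs no argument: in the SPA model all out-edges incident to $v_i$ are created at the single step when $v_i$ is born, and every edge added later is directed from a younger to an older vertex, so $\deg^+(v_i,\cdot)$ is constant from time $i$ on. It remains to prove the uniform upper bound, and since $\omega(n)\to\infty$ it is enough to show that, with probability $1-o(n^{-3})$, one has $\deg^+(v_i,i)\le C\log n$ for \emph{all} $i\le n$ with some absolute constant $C=C(p,A_1,A_2,m)$.

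First I would discard the link probability $p$. When $v_i$ arrives, an edge $(v_i,u)$ can appear only if $v_i\in S(u,i-1)$, and at most one such edge per $u$, hence
$$
\deg^+(v_i,i)\ \le\ D_i\ :=\ \big|\{u\in V_{i-1}\,:\,v_i\in S(u,i-1)\}\big|,
$$
the number of spheres of influence at time $i-1$ covering the freshly sampled point $v_i$. Conditionally on $G_{i-1}$ the spheres are fixed and $v_i$ is uniform on $S$, so $\Prob(D_i>t\mid G_{i-1})$ equals the volume of the set of points covered by more than $t$ spheres. Since the torus metric and the uniform measure are translation invariant, the law of the covering count $f_{i-1}(x):=|\{u\in V_{i-1}:x\in S(u,i-1)\}|$ does not depend on $x$; taking expectations this gives
$$
\Prob(D_i>t)\ =\ \E\big[\mathrm{vol}(\{x:f_{i-1}(x)>t\})\big]\ =\ \Prob\big(f_{i-1}(x_0)>t\big)
$$
for an arbitrary fixed $x_0\in S$. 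By a union bound over $i\le n$, the lemma reduces to the single tail estimate $\Prob(f_{i-1}(x_0)>C\log n)=o(n^{-4})$.

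The mean is under control: $\E f_{i-1}(x_0)=\E\big[\sum_{u\in V_{i-1}}|S(u,i-1)|\big]\le A_1\,\E|E_{i-1}|/(i-1)+A_2=O(1)$, since the expected number of edges is linear, $\E|E_{i-1}|=O(i)$ (immediate from $\E[|E_t|\mid\mathcal F_{t-1}]\le|E_{t-1}|(1+pA_1/(t-1))+pA_2$ together with $pA_1<1$, or quotable from~\cite{spa1}). The remaining — and main — difficulty is the upper tail of $f_{i-1}(x_0)$, which is delicate because the covering events $\{x_0\in S(u,i-1)\}$, $u\in V_{i-1}$, are \emph{not} independent: the radius of the sphere of $v_j$ at time $i-1$ depends on the positions of all vertices born after $v_j$. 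A first, crude route is to condition on the high-probability event of Corollary~\ref{cor:upper} and split the contributing vertices into dyadic blocks of birth times; on that event the radius inside each block is at most a \emph{deterministic} value, so the number of block-$\ell$ vertices whose (genuinely i.i.d.\ uniform) position lands in the corresponding ball around $x_0$ is an honest binomial random variable, and a Chernoff bound plus a union bound over the $O(\log n)$ blocks yields $f_{i-1}(x_0)=O(\omega(n)\log n)$ with probability $1-o(n^{-4})$. This is a hair too weak for the stated form of the lemma (it loses the slowly growing factor of Corollary~\ref{cor:upper}), so to reach the clean bound $C\log n$ I would instead run a step-by-step exposure of the SPA process, track $f_t(x_0)$ as $t$ runs from $1$ to $i-1$, write it as a martingale plus its compensator, and apply a Freedman-type inequality exploiting that the per-step conditional variance of $f_t(x_0)$ is small; the required estimate is of the same flavour as the common-neighbour and degree-concentration bounds of~\cite{spa4,spa5} and could alternatively be imported from there.

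The single step I expect to be the real obstacle is this last one: controlling the conditional fluctuations of $f_t(x_0)$ along the process. Between consecutive steps the spheres of the out-neighbours of the new vertex grow, while every sphere shrinks slightly because its denominator increases, so $x_0$ can both enter and leave spheres; one must show that both effects are negligible in conditional mean and variance, so that Freedman's bound takes hold at scale $\Theta(\log n)$ with failure probability $o(n^{-4})$. Everything else — the reduction to a single fixed point via symmetry, the $O(1)$ bound on the mean, and the final union bound over $i$ — is routine.
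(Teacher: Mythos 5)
Your ``crude route'' --- condition on the event of Corollary~\ref{cor:upper}, so that each sphere of influence at time $i-1$ has a deterministic volume bound, and then exploit that the positions are i.i.d.\ uniform to get a sum of independent indicators controlled by Chernoff --- is essentially the paper's proof, and your reason for discarding it is mistaken. The lemma does \emph{not} ask for $\deg^+(v_i,i)\le C\log n$ with an absolute constant $C$; it asks for $\omega\log n$ where $\omega$ is an \emph{arbitrary} function tending to infinity. So losing a slowly growing factor is harmless: apply Corollary~\ref{cor:upper} with $\omega^{1/3}$ in place of $\omega$, enlarge the sphere of $v_j$ to a deterministic ball of volume $\alpha_j = j^{-pA_1} i^{pA_1-1}\omega^{2/3}\log n$ around $v_i$ (which dominates $(A_1\deg^-(v_j,i)+A_2)/i$ on the good event), and let $X_j$ indicate that $v_j$ lands in that ball. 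The $X_j$ are genuinely independent, $\E\sum_{j<i} X_j = i^{pA_1-1}\omega^{2/3}\log n\sum_{j<i} j^{-pA_1} = O(\omega^{2/3}\log n) = o(\omega\log n)$ since $pA_1<1$, and Chernoff at level $\omega\log n$ gives failure probability $o(n^{-4})$ per $i$; a union bound over $i\le n$ finishes. No dyadic blocking is even needed.

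By contrast, the route you actually commit to --- tracking the covering count $f_t(x_0)$ as a process, decomposing it into a martingale plus compensator, and invoking a Freedman-type inequality --- is both unnecessary and unfinished: you yourself flag its key step (controlling the conditional mean and variance of the increments, including the fact that $x_0$ can both enter and leave spheres) as ``the real obstacle'' and do not carry it out. That is a genuine gap in the argument as proposed. The fix is simply to promote your fallback to the main argument; the only idea you were missing is the standard manoeuvre of running the auxiliary concentration statement with a more slowly growing function ($\omega^{1/3}$ instead of $\omega$) so that the accumulated polylogarithmic-in-$\omega$ losses still sit below the $\omega\log n$ target. Your observation that $\deg^+(v_i,i)=\deg^+(v_i,n)$ requires no argument, and your reduction by translation invariance to a single fixed point, both match the paper.
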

Note that a weaker bound of $\log^2 n$ was proved in~\cite{spa1}; with Corollary~\ref{cor:upper} in hand, we can get slightly better bound but the argument remains the same. 

According to the above lemma, a.a.s.~the out-degrees of all vertices do not exceed $\omega \log n$. Therefore, even if out-neighbours of a vertex form a complete graph, the contribution from them is at most $\binom{\omega \log n}{2}$, which is much smaller than $k$. Hence, all results discussed in this section also hold for the clustering coefficient $c(k,n)$ defined for the undirected graph $\hat G_n$ obtained from $G_n$ by considering all edges as undirected.

\section{Experiments}

In this section, we illustrate the theoretical, asymptotic, results presented in the previous section by analyzing the local clustering coefficient for graphs of various orders generated according to the SPA model.

\subsection{Algorithm}\label{sec:algorithm}

Let us first discuss the complexity of the straightforward (\textit{naive}) algorithm for generating graphs according to the SPA model. At each step we add one vertex and, for each existing vertex, we check if the new vertex belongs to its sphere of influence. Then we (possibly) add new edges and update the radii for all vertices. The complexity of this procedure is $\Theta(n^2)$.

Let us now propose a more efficient algorithm. First, we describe this algorithm and provide heuristic arguments about its complexity. Then, we compare running times of the new algorithm and the naive one. 

Our algorithm works in several phases, as described further in the text.
For now, let us assume that we already generated a graph on $n$ vertices according to the SPA model and we want to add one additional vertex.
It is known that
$$
\E \Big( \deg^- (v_i, t) \Big) \sim \frac {A_2}{A_1} \left( \frac {t}{i} \right)^{p A_1},
$$ 
provided that $i \gg 1$ (see, for example,~\cite{spa2}). We call a vertex \emph{heavy} if its degree is at least $D$ for some $D$; otherwise, it is \emph{light}. All heavy vertices are kept in a separate list $H$. Fix 
\begin{equation}\label{eq:D} 
D = \frac {A_2}{A_1} \left( \frac {n}{T} \right)^{pA_1},
\end{equation}
so $H$ has expected size around $T$. The choice of an optimal value of $T$ will be discussed further in this section.

Let us divide $S = [0,1]^2$ into $k$ squares where $k$ is some perfect square; that is, each square will have side length $1/\sqrt{k}$. (We choose the dimension $m=2$ for our simulations. However, the ideas can easily be applied for an arbitrary $m$.) All light vertices are kept in $k$ disjoint lists; let $L(i)$ be a list containing all light vertices from square $i$. The expected number of vertices in each list is $(n-T)/k$.

We want the following property to be satisfied:
\begin{equation}\label{eq:D}
\sqrt{ \frac {A_1 D + A_2}{\pi n} } \le \frac {1}{\sqrt{k}}.
\end{equation}
Indeed, if this is the case, then no light vertex $v_i$ has the area of influence that touches squares other than the square containing $v_i$ and the 8 adjacent squares. Moreover, the same property will hold for all $t>n$ as areas of influence of light vertices decrease with time. Hence, since we aim for an integer $\sqrt{k}$ to be as large as possible:
\begin{equation}\label{eq:k}
k = \left\lfloor \sqrt{\frac {\pi n}{A_1 D + A_2}} \right\rfloor^2 \Rightarrow k \approx \frac {\pi n} {A_2 \left(1+(n/T)^{pA_1}\right)}.
\end{equation}

The most expensive computational work for the algorithm is the number of comparisons needed in order to add a vertex $v_{n+1}$ to a graph, which is of order
\begin{equation}\label{eq:f}
f(T) := T + 9 \ \frac{n-T}{k} = 
T + \frac{9A_2}{\pi}(1-T/n)\left(1+(n/T)^{pA_1}\right)\,.
\end{equation}
Hence, the function $f(T)$ is minimized for
\begin{equation*}
T = \frac{9npA_1A_2(n/T)^{pA_1}}{\pi n - 9A_2 - 9A_2(1-pA_1)(n/T)^{pA_1}}\,.
\end{equation*}
For large $n$ the second and the third terms in the denominator are negligible, as $pA_1<1$; moreover, if $pA_1$ is close to $1$ we will soon show that $T=\Theta(n^{1/2-\epsilon})$ for some small $\epsilon >0$, so the approximation converges fast. Thus, we may approximate $T$ by:
\begin{equation}
T \approx n^{1-1/(pA_1+1)}\left(\frac{9pA_1A_2}{\pi}\right)^{1/(pA_1+1)}.\label{eq:optT}
\end{equation}
Using this $T$ we can calculate the recommended value of $D$, see~(\ref{eq:D}), and the density of the $\sqrt{k} \times \sqrt{k}$ grid, see~(\ref{eq:k}).

Below are some practical implementation details:
\begin{itemize}
\item It is computationally expensive to recalculate $H$ and $L$ division each time a new vertex is added.  By empirical testing, we have found that the recalculation should be done approximately after adding $t/4$ vertices, where $t$ is the number of vertices in already constructed graph. As a result, the number of phases is $O(\log n)$, as each time the number of vertices increases by approximately 25\%.
\item As we work in phases, at each step we have to check if some light vertex becomes heavy, and move it to the appropriate list, if needed. However, this operation is not expensive computationally.
\item After several phases, for actually constructed graphs the optimal parameters $k$, $T$ and $D$ might deviate from the theoretical values derived above. Therefore, in the implementation we choose the optimal parameters conditional on the actual input graph structure. Namely, for each candidate value $k$ we can calculate the corresponding $D$ using~\eqref{eq:D} and then calculate $T$ from the data (this is the actual number of heavy vertices given $D$). We choose $k$ to optimize the number of comparisons needed to add one vertex to the actual graph, the approximation for this value is given in~\eqref{eq:f}. After that we dynamically construct $H$ and $L$ lists. 
\end{itemize}
Let us now discuss the complexity of the obtained algorithm. Equation~\eqref{eq:f} shows that $T$ is expected to be of order $n^{pA_1/(pA_1+1)}$. So, we may derive from~\eqref{eq:k} that $k$ is of order $n^{1-pA_1 + (pA_1)^2/(pA_1+1)} = n^{1/(pA_1+1)}$. From~\eqref{eq:f} we obtain that $f(T)$ grows as $n^{pA_1/(pA_1+1)}$. So, the expected complexity of the whole process is $\Theta\left(n^{2 - 1/(pA_1+1)}\right) \ll n^2$.

Figure~\ref{fig:runtime} presents an empirical comparison of the running time for new and naive algorithms. We also present this figure in log-log scale. The computations were performed using Julia 0.6.2 language \cite{julia} and LightGraphs \cite{lightgraphs} package on a single thread of Intel i5-5200U @ 2.20GHz processor.

\begin{figure}[h]
\begin{center}
\includegraphics[height=4.82cm]{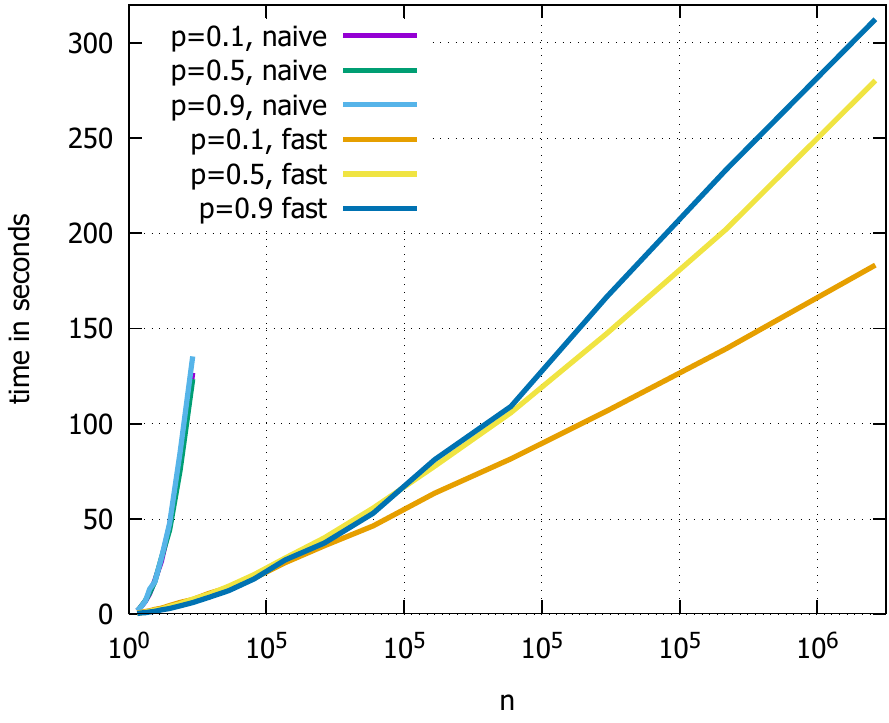}
\includegraphics[height=4.82cm]{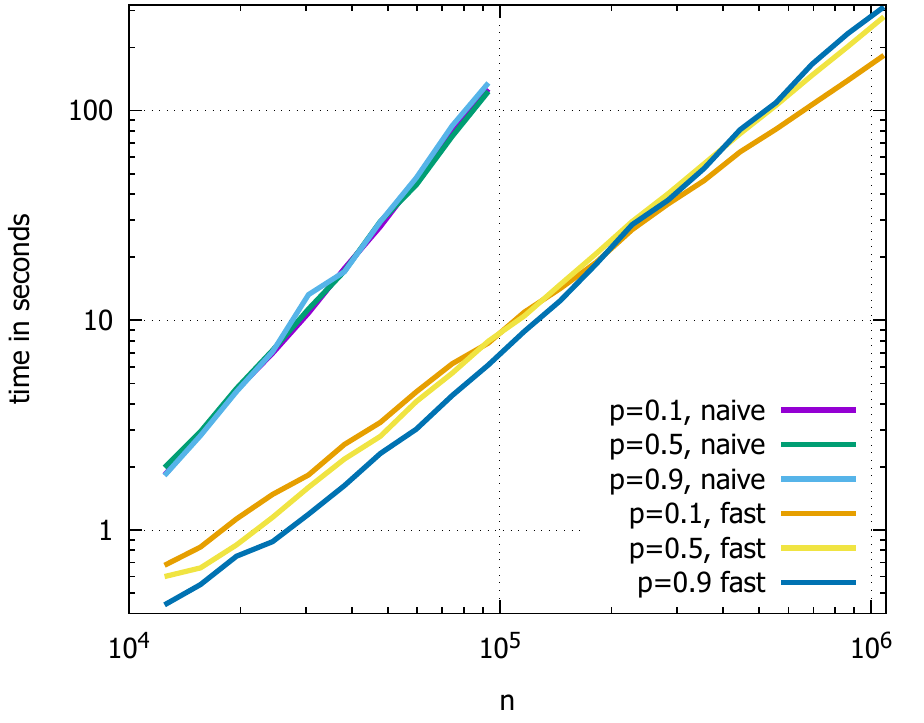}
\vspace{-20pt}
\caption{Running time of the proposed and the naive algorithms.}\label{fig:runtime}
\end{center}
\vspace{-14pt}
\end{figure}

Finally, let us mention that further improvements of the algorithm are possible. For example, one can keep more than two lists $H$ and $L$. For example, $L_s(i)$ could contain vertices of degree between $2^{s-1}$ and $2^{s}$ that landed in region $i$, so the total number of lists is $O(\log n)$. Then, the running time of the algorithm would be $O(n \log n)$. Indeed, during a phase that started at time $t$, $L_s$ has expected size $O(t\, 2^{-s/pA_1})$; since vertices from $L_s(i)$ are gathered from the square of area, say, $2^s/t$, the expected size of this list is $O(2^{s-s/(pA_1)}) = O(1)$. Hence, after adding one vertex, $O(\log n)$ lists are investigated and we expect only a constant number of comparisons done on each list. Of course, there is always a trade-off between the running time of an algorithm and how complicated it is to implement it. For our purpose, we decided to go for a simpler algorithm with only two lists.


\subsection{Empirical analysis of the local clustering coefficient}

In this section, we compare asymptotic theoretical results obtained in Section~\ref{sec:results_section} with empirical results obtained for graphs with finite $n$. All graphs are generated according to the algorithm described in Section~\ref{sec:algorithm}.

It is proven in Theorem~\ref{thm:average} that $\frac{1}{X_d} \sum_{v \in X_d} c^-(v,n) = \Theta (1/d)$ for $d \gg \log^C n$, where $C = 4 + (4pA_1 + 2)/(pA_1(1-pA_1))$.
In order to illustrate this result, we generated 10 graphs for each $p \in \{0.1,0.2,\ldots,0.9\}$, $A_1 = 1$, $A_2 = 10(1-p)/p$ ($A_2$ is chosen to fix the expected asymptotic degree equal 10) and computed the average value of $C^-(d,n)$ for $n=10^6$, see Figure~\ref{fig:average_clustering} (left). Similarly, Figure~\ref{fig:average_clustering} (right) presents the same measurements for the undirected average local clustering $C(d,n)$.
Note that in both cases figures agree with our theoretical results: both $C^-(d,n)$ and $C(d,n)$ decrease as $c/d$ with some $c$ for large enough $d$ (we added a function $10/d$ for comparison).
Note that for small $p$ the maximum degree is small, therefore the sizes of the generated graphs are not large enough to observe a straight line in log-log scale. 

\begin{figure}
\begin{center}
\includegraphics[width=0.49\textwidth]{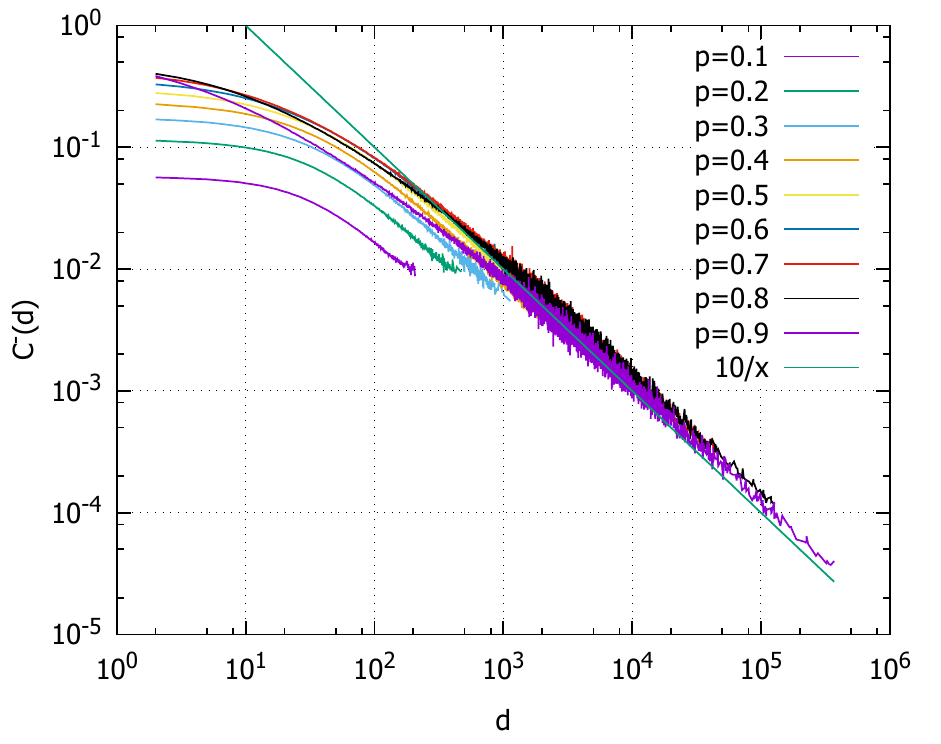}
\includegraphics[width=0.49\textwidth]{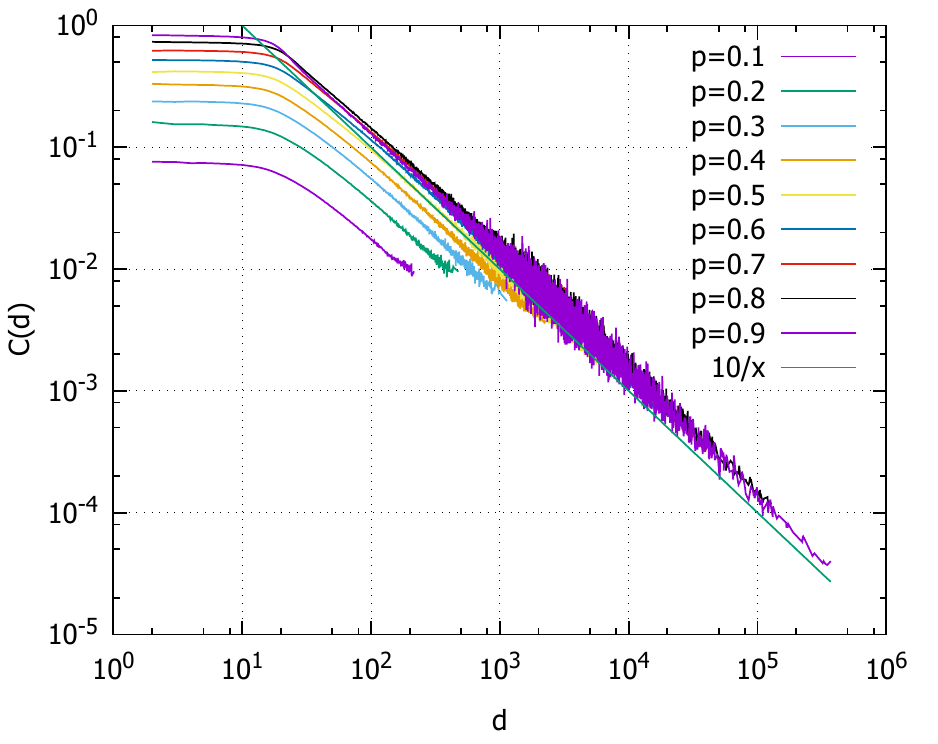}
\vspace{-10pt}
\caption{Average local clustering coefficient for directed (left) and undirected (right) graphs.}
\label{fig:average_clustering}
\end{center}
\vspace{-20pt}
\end{figure}

Note that for all $p\in (0,1)$ we have $C = 4 + \frac{4p+2}{p(1-p)} > 18$, so, our theoretical results are expected to hold for $d \gg \log^C n > 10^{20}$ which is irrelevant as the order of the graph is only $10^{6}$. However, we observe the desired behaviour for much smaller values of $d$; that is, in some sense, our bound is too pessimistic.
 

Also, note that the statement $C^-(d,n) = \Theta(1/d)$ is stronger that the statement of Theorem~\ref{thm:average}, since in the theorem we averaged $c^{-}(v,n)$ over the set $X_d$ of vertices of in-degree between $(1-\delta)d$ and $(1+\delta)d$. In order to illustrate the difference, on Figure~\ref{fig:smooth_clustering} we present the smoothed curves for the directed (left) and undirected (right) local clustering coefficients averaged over $X_d$ for $\delta = 0.1$. Note that this smoothing substantially reduce the noise observed on Figure~\ref{fig:average_clustering}.

\begin{figure}
\begin{center}
\includegraphics[width=0.49\textwidth]{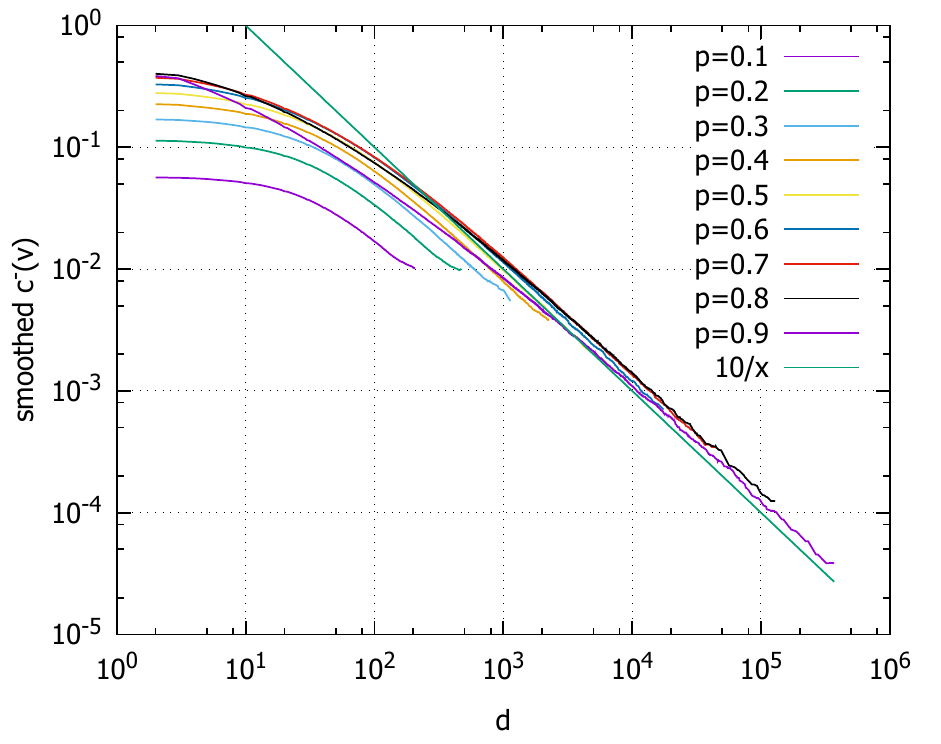}
\includegraphics[width=0.49\textwidth]{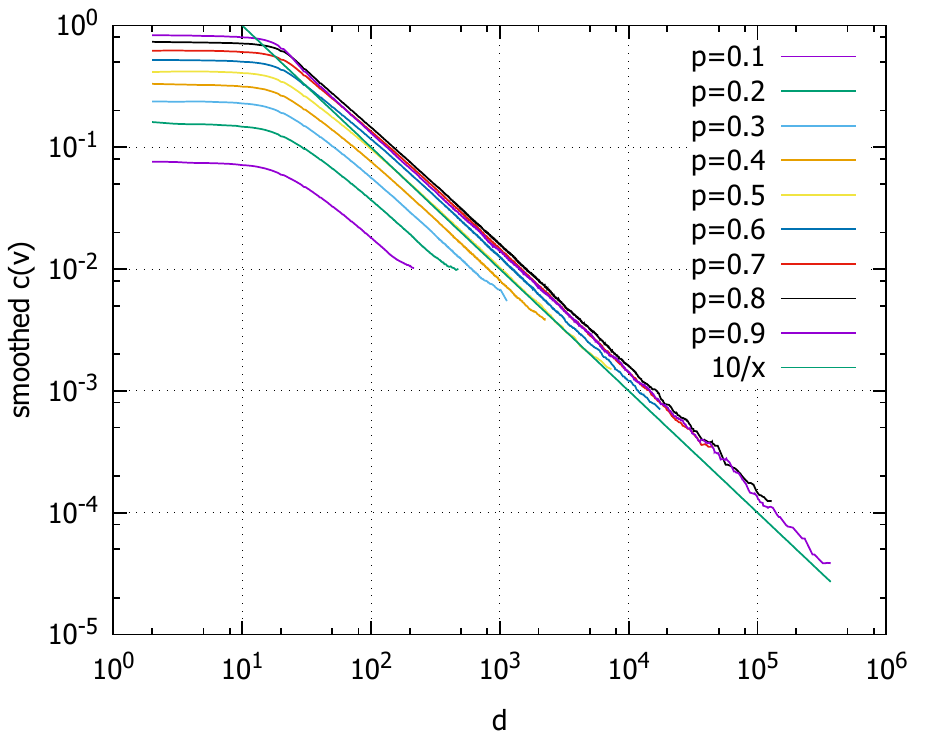}
\vspace{-10pt}
\caption{Local clustering coefficient for directed (left) and undirected (right) graphs averaged over $X_d$.}
\label{fig:smooth_clustering}
\end{center}
\vspace{-20pt}
\end{figure}

Next, let us illustrate the fact that the number of edges between ``new'' neighbours of a vertex is more predictable than the number of edges going from some neighbours to ``old'' ones. We extensively used this difference in Section~\ref{sec:results}, where we analyzed new and old edges separately. 
In our experiments, we split $c^-(v,n)$ into ``old'' and ``new'' parts as in~\eqref{eq:old_new}, but now we take $\hat{T}_v$ be the smallest integer $t$ such that $\deg^-(v,t)$ exceeds $\deg^-(v,n)/2$. As a result, we compute the average local clustering coefficients $C_{old}^-(d)$ and $C_{new}^-(d)$. Figure~\ref{fig:old_new} shows that $C_{new}^-(d)$ can almost perfectly be fitted by $c/d$ with some $c$, while most of the noise comes from  $C_{old}^-(d)$. 

\begin{figure}
\begin{center}
\includegraphics[width=0.49\textwidth]{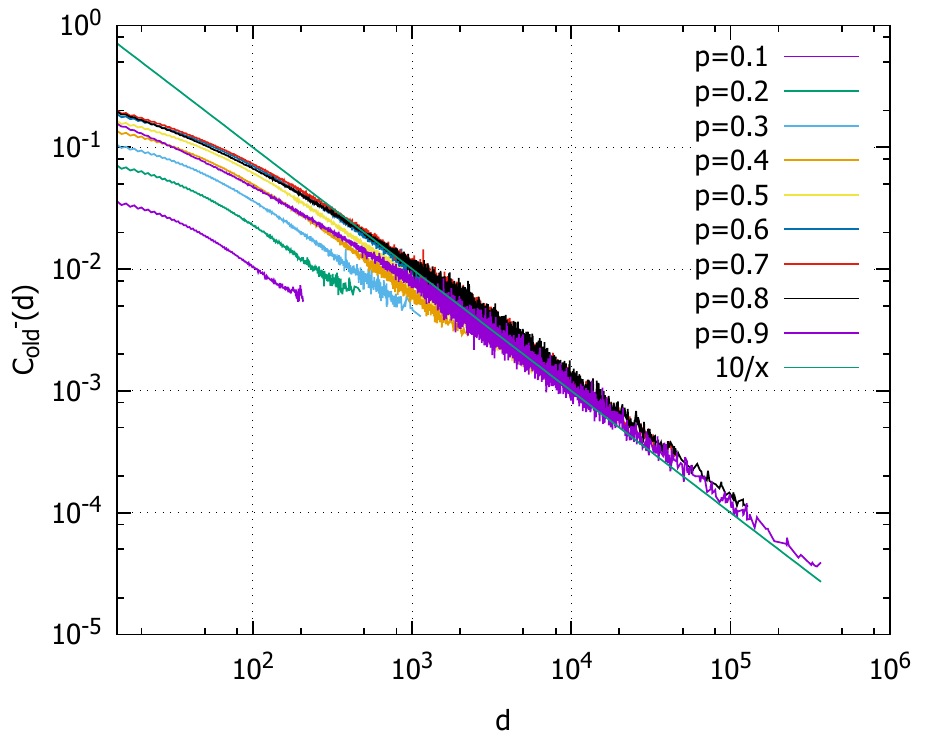}
\includegraphics[width=0.49\textwidth]{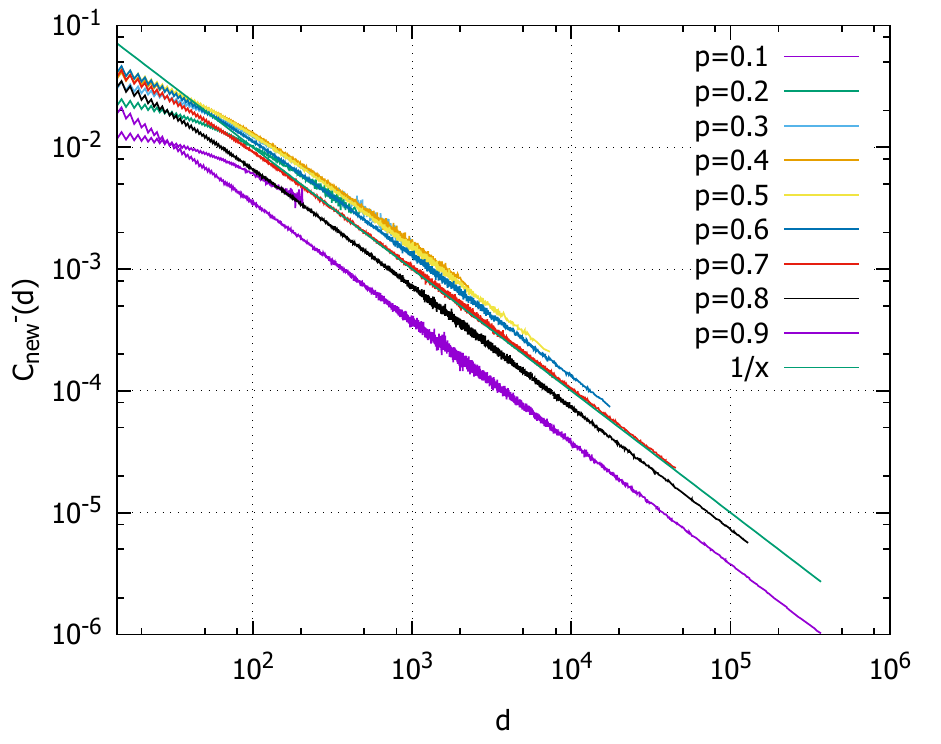}
\caption{Comparison of ``new'' and ``old'' parts of the average local clustering coefficient.}
\label{fig:old_new}
\end{center}
\end{figure}

Finally, Figure~\ref{fig:individual} shows the distribution of the individual local clustering coefficients for one graph generated with $p=0.7$. Theorem~\ref{thm:negative} states that a.a.s. there exist a vertex $v$ of degree $d$ with $c^-(v,n) \gg 1/d$. Also, according to this theorem, the situation is much worse for smaller values of $d$. 
Indeed, one can see on Figure~\ref{fig:individual} that for small $d$ the scatter of points is much larger. On the other hand, in Theorem~\ref{thm:average} we present bounds for $c^-(v,n)$ for almost all vertices, provided that $d$ is large enough. One can see it on the figure too and, similarly to previously discussed figures, we observe the expected behaviour even for relatively small $n$ despite the bound $\log^C n$ that is bigger than $n$ in our case.

\begin{figure}
\begin{center}
\includegraphics[width=0.6\textwidth]{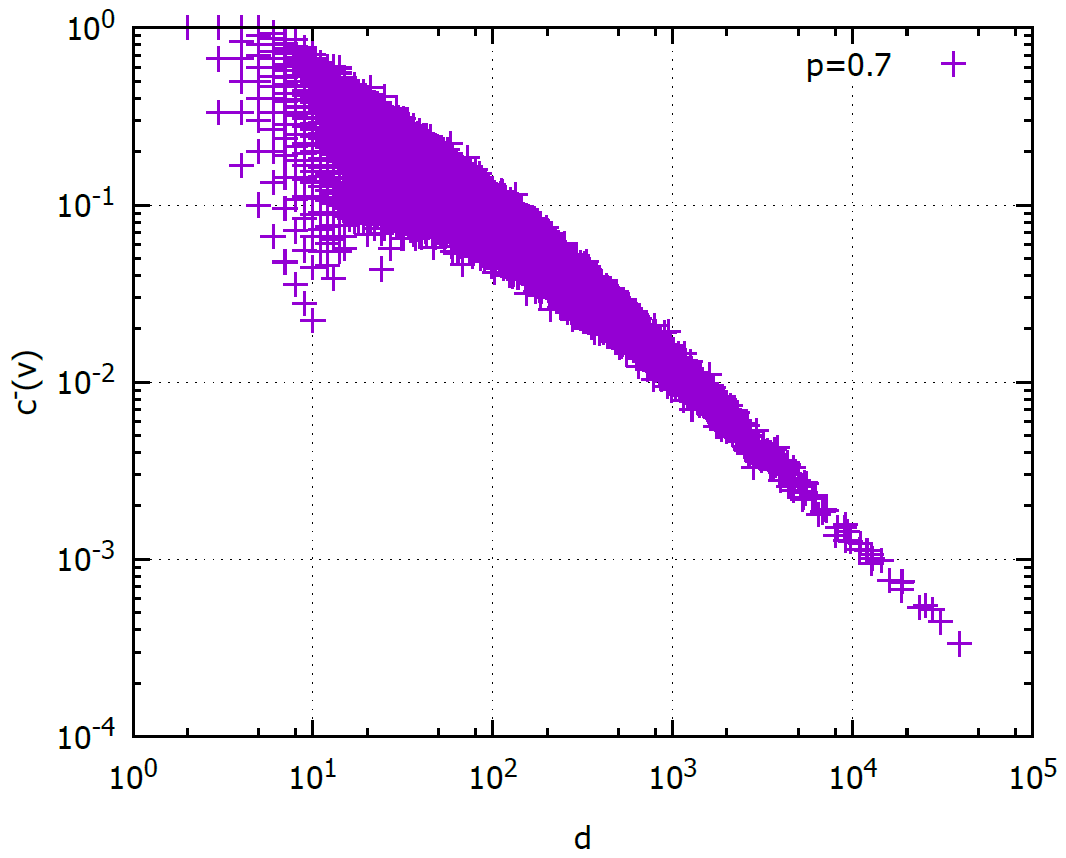}
\caption{The distribution of individual local clustering coefficients.}
\label{fig:individual}
\end{center}
\end{figure}

\section*{Acknowledgements}

This work is supported by Russian President grant MK-527.2017.1 and NSERC.

\bibliographystyle{splncs03}
\bibliography{SPA}

\end{document}